\RequirePackage[l2tabu, orthodox]{nag}		
\documentclass[10pt]{article}

%
%

\usepackage[T1]{fontenc}
\usepackage{lmodern}
\usepackage{microtype}

\usepackage{natbib}       									
\usepackage{amsmath}                        
\numberwithin{equation}{section}
\usepackage{graphicx}                       
\usepackage{epstopdf}  
\usepackage{subfigure}
\usepackage{enumitem}                       
\usepackage{mdwlist}												
\usepackage{ulem}														
\usepackage{cancel}													
\normalem																		
\usepackage[all]{xy}
\usepackage[plainpages=false, pdfpagelabels]{hyperref} 
\hypersetup{
		colorlinks   = true,
		citecolor    = black,
		linkcolor    = Maroon,
		urlcolor     = Turquoise
}
\usepackage[dvipsnames]{xcolor}
\usepackage{amssymb}                        
\usepackage[mathscr]{eucal}                 
\usepackage{dsfont}													
\usepackage[paperwidth=8.5in,paperheight=11in,top=1.25in, bottom=1.25in, left=1.00in, right=1.00in]{geometry}
\usepackage{mathtools}                      
\mathtoolsset{showonlyrefs=true}            
\usepackage{fixltx2e,amsmath}               
\MakeRobust{\eqref}

\linespread{1.3}                            
\usepackage{microtype}											
\usepackage{amsthm}                         
\allowdisplaybreaks                         
\theoremstyle{plain}
\newtheorem{theorem}{Theorem}
\numberwithin{theorem}{section}
\newtheorem{lemma}[theorem]{Lemma}          

\theoremstyle{definition}

\newtheorem{remark}[theorem]{Remark}
\newtheorem{assumption}[theorem]{Assumption}
\newtheorem{step}{Step}

%
%


\renewcommand{\(}{\left(}
\renewcommand{\)}{\right)}
\renewcommand{\[}{\left[}
\renewcommand{\]}{\right]}


\newcommand\Cb{\mathds{C}}
\newcommand\Eb{\mathds{E}}

\newcommand\Pb{\mathds{P}}

\newcommand\Rb{\mathds{R}}


\newcommand\Ac{\mathscr{A}}

\newcommand\Fc{\mathscr{F}}
\newcommand\Gc{\mathscr{G}}

\newcommand\Oc{\mathscr{O}}

\newcommand\Qc{\mathscr{Q}}


\newcommand\om{\omega}

\newcommand\sig{\sigma}

\newcommand\gam{\gamma}
\newcommand\Gam{\Gamma}
\newcommand\lam{\lambda}
\newcommand\del{\delta}



\newcommand\vb{\bar{v}}
\newcommand\ub{\bar{u}}




\newcommand\varphih{\widehat{\varphi}}



\renewcommand\d{\partial}

\newcommand\ii{\mathtt{i}}
\newcommand\dd{\mathrm{d}}
\newcommand\ee{\mathrm{e}}

\newcommand\Id{\textrm{Id}}



\newcommand{\trunc}{h}
\newcommand{\truncz}{h(z)}

\newcommand{\levy}{L\'{e}vy }

\newcommand{\half}{\frac{1}{2}}
\newcommand{\compensator}{\nu}

\newcommand{\const}{C}
\newcommand{\sdedrift}{b}
\DeclareMathOperator{\ess}{ess}
\newcommand{\moment}{\mu}
\newcommand{\expmoment}{\varphi}
\newcommand{\jumpintegrand}{c}

%
%

\begin{document}

\title{Pricing Variance Swaps on Time-Changed Markov Processes}

\author{
Peter Carr
\thanks{Department of Finance and Risk Engineering, NYU Tandon, New York, USA.}
\and
Roger Lee
\thanks{Department of Mathematics, University of Chicago, Chicago, USA.}
\and
Matthew Lorig
\thanks{Department of Applied Mathematics, University of Washington, Seattle, USA.}
}

\date{This version: \today}

\maketitle

\begin{abstract}
We prove that the variance swap rate (fair strike) equals the price of a co-terminal European-style contract when the underlying is an exponential Markov process, time-changed by an arbitrary continuous stochastic clock, which has arbitrary correlation with the driving Markov process, provided that the payoff function $G$ of the European contract satisfies an ordinary integro-differential equation, which depends only on the dynamics of the Markov process, not on the clock.  We present examples of Markov processes where the function $G$ that prices the variance swap can be computed explicitly.  In general, the solutions $G$ are not contained in the logarithmic family previously obtained in the special case where the Markov process is a \levy process.
\end{abstract}

\noindent
\textbf{Keywords}:  Variance swap, Time change, Markov process

%
%

\section{Introduction}
\label{sec:intro}
Consider a forward price $F$ that evolves in continuous time.  Let time zero be the valuation time for a derivative security written on the path of $F$, with a fixed maturity date $T > 0$.  Assume that $F_0 > 0$ is a known constant, and that the $F$ process is strictly positive over a time interval $[0,T]$.  As a result, the $\log$ price process $X:=\log F$ is well-defined, and derivative securities expiring at $T$ can also be written on the path of $X$. In particular, we focus on a continuously-monitored variance swap, which pays the difference between the terminal quadratic variation of the $\log$ price process $[\log F]_T$ and a constant determined at inception.  For brevity, we will refer to a continuously monitored variance swap as a VS in the sequel.  As with any swap, the constant that is determined at inception is chosen so that there is no initial cost of entering into the VS.  The objective of this paper is to give additional conditions on the dynamics of $F$ under which this constant can be determined from an initial observation of the $T$-maturity implied volatility smile.

Earlier papers by \cite{neuberger} and \cite{dupire1993model} show that continuity of $F$ suffices for pricing a VS relative to the co-terminal smile.  \cite{carr2011variance} weakens the continuity hypothesis by showing that the $\log$ price $X$ can be specified as a L\'{e}vy process running on an unspecified continuous clock.  When the L\'evy process is specified as Brownian motion with drift $(-1/2)$, the earlier results of \cite{neuberger} and \cite{dupire1993model} arise as a special case.  The more general formulation of \cite{carr2011variance} allows for the variance and jump-intensity to depend on the level of $X$ through a local time-change (see Remark \ref{rmk:tc}).  However, the local variance and L\'evy kernel must have the same functional dependence on $X$ (up to a scaling constant).  Additionally, while the arrival rate of each jump size in $X$ is allowed to depend on the level of $X$, the ratio of the arrival rates at any two jump sizes is constant in that previous paper.

This paper weakens the stationary independent increments property of the L\'{e}vy process used by \cite{carr2011variance}.  We allow that $X$ could be specified as a time-homogeneous Markov process running on an unspecified continuous clock.  As a result (i) the variance and jump-intensity may have distinct $X$-dependence and (ii) the ratio of the arrival rates at any two jump sizes of $X$ can depend on the current level of $X$.

In effect, we allow the background process to have nearly the full generality of \emph{general} Markov processes whose jump times are not predictable, as discussed in Remark \ref{rem:cinlar}.  We allow that general background Markov process to undergo a time-change by an unspecified continuous stochastic clock which may have arbitrary correlation or dependence on the background process.   In this setting, we prove that European-style payoff functions $G$ price the variance swap, in the sense that the variance swap rate (fair strike) equals the price of a contract paying $G(\log F_T) - G(\log F_0)$, 
provided that $G$ satisfies an ordinary integro-differential equation that depends only on the dynamics of the Markov driver, not on the clock.

Our results are related to the semiparametric approach taken by \cite{lorigmendoza}, who consider the pricing of a VS when the underlying forward price $F$ is modeled as Feller diffusion time-changed by an unspecified L\'evy subordinator.
For fully parametric approaches to VS pricing in models with jumps and stochastic volatility we refer the reader to 
\cite{itkin2010pricing,zheng-kwok,FILIPOVIC201644,CUI2017381}.
For model-independent bounds on (discrete and continuous) VS prices see \cite{hobsonklimmek}, \cite{Nabil}, and 
\citet[Example 5.7]{Henry-Labordere2016}.


The rest of this paper proceeds as follows.
Section \ref{sec:tc.dynamics} specifies dynamics for the forward price process and verifies that these dynamics can arise from time-changing the solution of a stochastic differential equation.
Section \ref{sec:pricing} states and proves our main result (Theorem \ref{thm:main}), which establishes
that the VS has the same value as a European-style claim whose payoff function solves an ordinary integro-differential equation (OIDE).
Section \ref{sec:examples} provides examples of price dynamics for which we can solve the OIDE explicitly.
Section \ref{sec:conclusion} concludes.

%
%

\section{Time-changed Markov dynamics}
\label{sec:tc.dynamics}


\subsection{Assumptions}
\label{sec:assumptions}
With respect to a (``calendar-time'') filtration $\{\Fc_t\}_{t\geq 0}$ on a probability space $(\Omega, \Fc, \Pb)$, assume that $X$ is a semimartingale with predictable characteristics $(B,A,\nu)$, relative to a truncation function $\trunc$ (to be definite, let $\truncz:=z\mathbf{1}_{\{|z|\leq 1\}}$), which satisfy
\begin{align}
B_t
	&= \int_0^t b_\trunc(X_{s-}) \dd \tau_s , &
A_t
	&= \int_0^t a^2(X_{s-}) \dd \tau_s , &
\compensator(\dd t,\dd z)
	&=	\dd \tau_t\times\mu(X_{t-},\dd z) ,  \label{eq:semichar}
\end{align}
where $\tau$ is a real-valued continuous increasing adapted process that is null at zero, $a$ is a Borel function, and for each fixed $x \in \Rb$ the $\mu(x,\cdot)$ is a \levy measure, and
\begin{align}
\sup_{x \in \Rb}|a(x)|
	&<	\infty , & 
\sup_{x \in \Rb} \int_\Rb z^2 \mu(x,\dd z)
	&<	\infty , &  
\sup_{x \in \Rb} \int_\Rb (\ee^z-1-z) \mu(x,\dd z)
	&<	\infty , \label{eq:asqexpintegrable}  
\end{align}
with
\begin{align}
b_\trunc(x)
	&:=		-\frac{1}{2}a^2(x) - \int_\Rb \( \ee^z - 1 - \truncz \) \mu(x,\dd z) . \label{eq:drift}
\end{align}
The intuition of the \textit{L\'evy kernel} or \textit{transition kernel} $\mu$ is that it assigns, to each point $x$ in the state space, a ``local'' \levy measure $\mu(x,\cdot)$.  Jumps of size in any interval $J$ arrive with intensity $\mu(x,J)$ when $X$ is at $x$.

Define the underlying forward price process $F=\{F_t\}_{t \in [0,T]}$ by
\begin{align}
F_t
	&= 		\exp(X_t).  \label{eq:F}
\end{align}
Regarding $\Pb$ as risk-neutral measure, we have chosen $b_\trunc$ in \eqref{eq:drift} to ensure $F$ is a local martingale.  If $\tau_T$ is integrable, then Lemma \ref{lem:martingale} will imply that $F$ is a true martingale.


\subsection{Time-change of an SDE solution}
\label{sec:tcsde}
This section verifies that the assumptions of Section \ref{sec:assumptions} hold in the case that $X$ comes from time-changing the solution of a stochastic differential equation (SDE) driven by a Brownian motion and a Poisson random measure.
With respect to a filtration $\{\Gc_u\}_{u\geq 0}$ (the ``business time'' filtration), consider a Brownian motion $W$, and a Poisson random measure $N$ with intensity measure $\mu_N(\dd z)\dd u$ for some \levy measure $\mu_N$.  Assume that $Y$ is a semimartingale that satisfies
\begin{align}
\dd Y_u
	&=	\sdedrift(Y_u) \, \dd t + a(Y_u) \, \dd W_u + \int_{z\in\Rb} \jumpintegrand(Y_{u-},z)\, (N(\dd u,\dd z)-
\mu_N(\dd z)\dd u),
\label{eq:dY}
\end{align}
where $a$ is a bounded Borel function,
$b$ is given by
\begin{align}
\sdedrift(x) &= -\frac{1}{2}a^2(x) - \int_\Rb \(\ee^z - 1 - z \) \mu(x,\dd z),
\end{align}
and $\jumpintegrand$ is a Borel function such that $\mu$, defined for each Borel set $J$ by
\begin{align}
\mu(x,J) &:= \mu_N(\{z: \jumpintegrand(x,z)\in J\backslash\{0\}\}),
\end{align}
satisfies
\begin{align}
\sup_{x \in \Rb} \int_\Rb z^2\mu(x,\dd z) + \sup_{x \in \Rb} \int_\Rb (\ee^z-1-z)\mu(x,\dd z)
	&<	\infty.
\end{align}
Then by \citet[Prop.\ III.2.29]{jacod1987limit}, the semimartingale characteristics of $Y$ are $(\widetilde{B},\widetilde{A},\widetilde{\compensator})$, where
\begin{align}
\widetilde{B}_u
	&= \int_0^u b_\trunc(Y_{v-})\dd v , & 
\widetilde{A}_u
	&= \int_0^u a^2(Y_{v-})\dd v , & 
\widetilde{\compensator}(\dd u,\dd z)
	&=\dd u\times\mu(Y_{u-},\dd z), \label{eq:abnutilde} 
\end{align}
with $b_\trunc$ defined in \eqref{eq:drift}.

Now let $\{\tau_t\}_{t\geq 0}$ be a continuous increasing family of finite $\Gc$-stopping times (which are \textit{not}
assumed to be independent of $Y$).  Let the ``calendar-time'' filtration be defined by $\Fc_t:=\Gc_{\tau_t}$, and let
\begin{align}
X_t
	&:= Y_{\tau_t}.
\end{align}
By \citet[Lemma 5]{kallsen2002time}, the $\Fc$-characteristics of $X$ are $(B,A,\nu)$ where $A_t = \widetilde{A}_{\tau_t}$, $B_t = \widetilde{B}_{\tau_t}$ and $\nu$ is determined by
\begin{align}
\int_{[0,t]\times\Rb} \mathbf{1}_J(z) \nu(\dd s,\dd z)
	&= \int_{[0,\tau_t]\times\Rb} \mathbf{1}_J(z) \widetilde{\nu}(\dd u,\dd z) , \label{eq:tccompensator}
\end{align}
for general Borel sets $J$ and $t \geq 0$.  By the first two equalities in \eqref{eq:abnutilde} we have
\begin{align}
\widetilde{A}_{\tau_t}
	&= \int_0^{\tau_t} a^2(Y_{v-})\dd v = \int_0^t a^2(X_{s-})\dd\tau_s, &
\widetilde{B}_{\tau_t}
	&= \int_0^{\tau_t} b_\trunc(Y_{v-})\dd v = \int_0^t b_\trunc(X_{s-})\dd\tau_s,
\end{align}
and, by substituting
the last equality in \eqref{eq:abnutilde}
into \eqref{eq:tccompensator} and changing variables $u$ to $\tau_s$, we have
\begin{align}
\int_{[0,t]\times\Rb} \mathbf{1}_J(z) \nu(\dd s,\dd z)
	&= \int_{[0,t]}\int_{\Rb} \mathbf{1}_J(z) \mu(X_{s-},\dd z)\dd \tau_s.
\end{align}
Therefore $(B,A,\nu)$ satisfy \eqref{eq:semichar}.  This verifies the hypotheses of Section \ref{sec:assumptions}, as claimed.

\begin{remark}\label{rem:cinlar}
Time-changes of SDE solutions are nearly as general as time-changes of \emph{general} Markov processes whose jump times are not predictable.

To be precise, \cite{cinlarjacod} show that every strong Markov quasi-left-continuous 
semimartingale (which includes every Feller semimartingale) is a continuous time change of an SDE solution driven by Brownian motion and a Poisson random measure
(on an enlarged probability space if needed).
Thus, if $X$ is a continuous time-change $\tau'$ of a \emph{general} Feller semimartingale $Y'$, then by \c{C}inlar-Jacod, $Y'$ is a continuous time change $\tau''$ of an SDE solution $Y$, and therefore $X$ is a continuous time change $\tau'\circ\tau''$ of an SDE solution $Y$.
\end{remark}


\subsection{Notations}
\label{sec:notation}
Let $C^n(\Rb)$ denote the class of $n$-times continuously differentiable functions, and define the integro-differential operator $\Ac$ by
\begin{align}
\Ac g(x)
	&:= b_\trunc(x) g'(x)+\frac{a^2(x)}{2} g''(x)+\int_\Rb(g(x+z)-g(x)-g'(x)\truncz)\ \mu(x,\dd z)\\
	&= \frac{a^2(x)}{2}(g''(x)-g'(x))+\int_\Rb(g(x+z)-g(x)+(1-\ee^z)g'(x))\ \mu(x,\dd z) , \label{eq:generator}
\end{align}
for all $g\in C^2(\Rb)$ such that $g(x+z)-g(x)+(1-\ee^z)g'(x)\in L^1(\mu(x,\dd z))$ for all $x$.

In more concise notation,
\begin{align}
\Ac
	&= \frac{1}{2} a^2(x) \( \d^2 - \d \) + \int_\Rb \Big(\ee^{z\d} - 1 +(1-\ee^z) \d \Big)\ \mu(x,\dd z), \label{eq:A.first}
\end{align}
where $\ee^{z\d}$ is the \textit{shift operator} defined by $\ee^{z\d} g(x) := g(x+z)$.  This use of $\d$ to express translations in the jump part of the generator $\Ac$ follows \cite{itkin2012using}.

Let $C^{1+}(\Rb)$ denote the union of $C^2(\Rb)$ and the following set: all $C^1(\Rb)$ functions $g$ whose derivative is everywhere absolutely continuous, and whose second derivative (which therefore exists a.e.) is equal (a.e.) to a bounded function, which we will still denote by $g''$ or $\d^2g$.

Thus the definition of $\Ac$ extends, by relaxing the $g\in C^{2}(\Rb)$ condition to $g\in C^{1+}(\Rb)$, which still defines $\Ac g$ uniquely, up to sets of measure zero, via \eqref{eq:generator}.

%
%

\section{Variance swap pricing}
\label{sec:pricing}
In what follows, each $C$ will denote a constant (non-random and non-time-varying).  Different instances of $C$, even in the same expression, may have different values.

\begin{lemma}\label{lem:special}
Suppose that $g\in C^{1+}(\Rb)$ and there exists $p\in\Rb$ such that
\begin{align}
\sup_{x\in\Rb}|g'(x)\ee^{-px}|
	&<\infty &
	&\textrm{and}&
\sup_{x\in\Rb}\int_\Rb (\ee^{pz}-1-pz)\ \mu(x,\dd z)
	&<\infty . \label{eq:expcond}
\end{align}
Then $g(X)$ is a special semimartingale.
\end{lemma}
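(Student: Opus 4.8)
The plan is to apply the (generalized) It\^o formula to $g(X)$---legitimate because $g\in C^{1+}(\Rb)$, whose derivative is absolutely continuous with an a.e.\ second derivative that is locally bounded---and then to reorganize the resulting semimartingale decomposition into the sum of a local martingale and a predictable finite-variation process. Exhibiting such a decomposition is exactly what it means for $g(X)$ to be special, and the candidate finite-variation part will turn out to be $\int_0^\cdot \Ac g(X_{s-})\,\dd\tau_s$. Concretely, I would start from the canonical representation of $X$ associated with the characteristics \eqref{eq:semichar}, namely $X=X_0+B+X^c+\trunc(z)*(\mu^X-\nu)+(z-\trunc(z))*\mu^X$, where $\mu^X$ denotes the jump measure of $X$ and $\nu$ its compensator, and $X^c$ is the continuous local martingale part with $[X^c]=A$.

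Substituting this into It\^o's formula and regrouping, and writing $W(s,z):=g(X_{s-}+z)-g(X_{s-})$ for the jump of $g(X)$, the terms assemble into a continuous local martingale $\int_0^\cdot g'(X_{s-})\,\dd X^c_s$, a purely discontinuous piece $W*(\mu^X-\nu)$, and a remaining predictable drift. Using $\dd B_s=b_\trunc(X_{s-})\dd\tau_s$, $\dd A_s=a^2(X_{s-})\dd\tau_s$ and $\nu(\dd z,\dd s)=\mu(X_{s-},\dd z)\dd\tau_s$ from \eqref{eq:semichar}, that remainder collapses, via the definition \eqref{eq:generator} of $\Ac$, to $\int_0^t\Ac g(X_{s-})\,\dd\tau_s$. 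Since $X_{-}$ is c\`agl\`ad (hence locally bounded) and $\tau$ is continuous (hence locally bounded), this integral will be a predictable process of finite variation once I verify that $\Ac g(X_{s-})$ is finite.

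The two genuinely analytic points, and the main obstacle, concern the large jumps. I must check (i) that $W\in G_{\mathrm{loc}}(\mu^X)$, so that $W*(\mu^X-\nu)$ is a well-defined local martingale, for which it suffices that $(|W|^2\wedge|W|)*\nu$ be locally integrable; and (ii) that the integrand $g(x+z)-g(x)+(1-\ee^z)g'(x)$ of $\Ac g$ lies in $L^1(\mu(x,\dd z))$. After a standard localization keeping $X_{s-}$ in a compact set, both reduce to uniform-in-$x$ bounds on integrals against $\mu(x,\cdot)$. The key estimate is that the first hypothesis in \eqref{eq:expcond} yields $|g(x+z)-g(x)|\le (C/|p|)\,\ee^{px}|\ee^{pz}-1|$ for all $x,z$ (the degenerate case $p=0$, where $g'$ is bounded, being immediate). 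For $|z|\le 1$ this is $O(|z|)$ and its contributions are controlled by $\sup_x\int z^2\mu(x,\dd z)<\infty$ from \eqref{eq:asqexpintegrable} together with local boundedness of $g''$; for $|z|>1$ one bounds $|\ee^{pz}-1|$ by a linear combination of $\ee^{pz}-1-pz$, $1$, and $|z|$, so that the second hypothesis in \eqref{eq:expcond} and the $L^2$ bound render $\int_{|z|>1}|\ee^{pz}-1|\mu(x,\dd z)$ finite uniformly in $x$; the $(1-\ee^z)g'(x)$ term in $\Ac g$ is handled identically using the third bound in \eqref{eq:asqexpintegrable}. The difficulty is precisely this matching: the exponential growth allowance on $g'$ must be absorbed by the exponential-moment control \eqref{eq:expcond} on the tails of $\mu$, with the subtraction of $pz$ (resp.\ the $L^2$ assumption) taming the accumulation of small jumps. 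Granting these estimates, $g(X)=g(X_0)+\big(\int_0^\cdot g'(X_{s-})\,\dd X^c_s+W*(\mu^X-\nu)\big)+\int_0^\cdot\Ac g(X_{s-})\,\dd\tau_s$ is the desired decomposition into a local martingale plus a predictable finite-variation part, so $g(X)$ is special.
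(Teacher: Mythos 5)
Your argument is correct, but it reaches the conclusion by a different route than the paper. You build the canonical special decomposition explicitly: It\^o's formula plus the canonical representation of $X$ yields $g(X)=g(X_0)+M+\int_0^\cdot\Ac g(X_{s-})\,\dd\tau_s$ with $M=\int_0^\cdot g'(X_{s-})\,\dd X^c_s+W*(\mu^X-\nu)$, and you verify the integrability needed for each piece to exist. The paper is more economical: after observing via It\^o that $g(X)$ is a semimartingale, it invokes the characterization (citing \citet[Lemma 2.8]{kallsen2002time}) that a semimartingale is special if and only if the compensator of its large jumps --- here the process \eqref{eq:Gbigjumps} --- is finite; since the small-jump part of the compensator is automatically locally integrable for any semimartingale, only the large jumps need estimating, and nothing about $g''$ or the $L^1(\mu(x,\dd z))$ membership of the $\Ac g$-integrand has to be checked at this stage. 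The core analytic content is identical in the two proofs: the bound $|g(x+z)-g(x)|\le \const\,\ee^{px}|\ee^{pz}-1|$ obtained by integrating the hypothesis on $g'$, followed by domination of $|\ee^{pz}-1|$ on the relevant set by $(\ee^{pz}-1-pz)$ plus a multiple of $z^2$, so that the two conditions in \eqref{eq:expcond} together with \eqref{eq:asqexpintegrable} close the estimate (your variant, $|\ee^{pz}-1|\le(\ee^{pz}-1-pz)+|p||z|$ with $|z|\le z^2$ on $\{|z|>1\}$, is equivalent to the paper's choice of the constant $k(m)$). What your approach buys is the explicit drift $\int_0^\cdot\Ac g(X_{s-})\,\dd\tau_s$, which the paper only establishes later, in Lemma \ref{lem:martingale}, by extending \citet[Theorem II.2.42c]{jacod1987limit}; what the paper's approach buys is brevity, deferring that construction to where it is actually needed.
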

\begin{proof}
By the form of It\^o's rule in, for instance \citet[Theorem IV.70]{protter},
$g(X)$ is a semimartingale.

By \citet[Lemma 2.8]{kallsen2002esscher},
it suffices to check that the predictable process
\begin{align}
\int_0^t \int_{\{z: |g(X_{s-}+z)-g(X_{s-})|>1\}}|g(X_{s-}+z)-g(X_{s-})|\mu(X_{s-},\dd z)\dd\tau_s  \label{eq:Gbigjumps}
\end{align}
is finite (hence of finite variation, as it is increasing in $t$).

In the case $p=0$, we have $|g(x+z)-g(x)|\leq \const|z|$.  In the case $p\neq 0$, we have
\begin{align}
|g(x+z)-g(x)|\leq \int_{x\wedge(x+z)}^{x \vee (x+z)} \const \ee^{p\zeta}\dd \zeta = \const \ee^{px}|\ee^{pz}-1|.
\end{align}
In this case, for each $m>0$, let $k(m)$ be such that $|\ee^{pz}-1|\textbf{1}_{|\ee^{pz}-1|>1/m}<(\ee^{pz}-1-pz)+k(m)z^2$ for all $z$,
and let $M:=\sup_{s\in[0,T]} \ee^{pX_s}<\infty$ because $X$ is c\`adl\`ag.  Then
\begin{align}
\int_{\{z: |g(X_{s-}+z)-g(X_{s-})|>1\}}|g(X_{s-}+z)-g(X_{s-})|\mu(X_{s-},\dd z)
\end{align}
is bounded in case $p=0$ by $\sup_{x\in\Rb}\int_{\{z: |z|>1/\const\}}\const |z|\mu(x,\dd z)<\infty$,
and in case $p\neq 0$ by $C$ times
\begin{align}
&\sup_{x\in\Rb}\int_{\{z: |\ee^{pz}-1|>1/(CM)\}} M |\ee^{pz}-1|\mu(x,\dd z) \\ 
 &\quad \leq M \sup_{x\in\Rb}\int_\Rb (\ee^{pz}-1-pz)\mu(x,\dd z)+Mk(CM) \sup_{x\in\Rb}\int_\Rb z^2 \mu(x,\dd z)<\infty.
\end{align}
These upper bounds do not depend on $s\in[0,t]$, which verifies that \eqref{eq:Gbigjumps} is finite.
\end{proof}

\begin{lemma}\label{lem:supx}
If $\Eb\tau_T<\infty$ then $\Eb\sup_{t\in[0,T]}|X_t|<\infty$.
\end{lemma}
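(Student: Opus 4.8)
The plan is to use that $X$ is a special semimartingale, to split it into a predictable drift and a local martingale, and to bound the expected running maximum of each part by a constant multiple of $\Eb\tau_T$, exploiting the uniform-in-$x$ bounds \eqref{eq:asqexpintegrable}.

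First I would note that $X$ is special. Applying Lemma \ref{lem:special} with $g(x)=x$ (which lies in $C^{1+}(\Rb)$) and $p=0$, the two conditions in \eqref{eq:expcond} read $\sup_x|g'(x)|=1<\infty$ and $\sup_x\int_\Rb(\ee^{0}-1-0)\,\mu(x,\dd z)=0<\infty$, so $g(X)=X$ is a special semimartingale. Because \eqref{eq:asqexpintegrable} gives $\int_{\{|z|>1\}}|z|\,\mu(x,\dd z)\le\int_\Rb z^2\,\mu(x,\dd z)<\infty$ uniformly in $x$, all jumps may be compensated; writing $\mu^X$ for the jump measure of $X$, with compensator $\compensator(\dd z,\dd s)=\mu(X_{s-},\dd z)\,\dd\tau_s$, the fully compensated canonical decomposition reads
$$X_t = X_0 + \int_0^t \tilde b(X_{s-})\,\dd\tau_s + X^c_t + \int_0^t\!\!\int_\Rb z\,(\mu^X-\compensator)(\dd z,\dd s),$$
where $X^c$ is the continuous local-martingale part with $[X^c]=A$ and, via \eqref{eq:drift}, the predictable drift rate $b_\trunc(x)+\int_\Rb(z-\trunc(z))\,\mu(x,\dd z)$ collapses to $\tilde b(x)=-\tfrac12 a^2(x)-\int_\Rb(\ee^z-1-z)\,\mu(x,\dd z)$.

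Now I would bound the three terms. Since $X_0=\log F_0$ is constant, it contributes nothing. The drift rate $\tilde b$ is bounded by \eqref{eq:asqexpintegrable} directly, say $\sup_x|\tilde b(x)|=:C_d<\infty$, so $\sup_{t\le T}\big|\int_0^t\tilde b(X_{s-})\,\dd\tau_s\big|\le C_d\,\tau_T$, whose expectation is finite by hypothesis. Writing $M:=X^c+\int_0^{\cdot}\!\int_\Rb z\,(\mu^X-\compensator)$ for the martingale part, the two summands are orthogonal (one continuous, one purely discontinuous), so $[M]_T=A_T+\int_0^T\!\int_\Rb z^2\,\mu^X(\dd z,\dd s)$ and, passing to the compensator, $\Eb[M]_T=\Eb\int_0^T\big(a^2(X_{s-})+\int_\Rb z^2\,\mu(X_{s-},\dd z)\big)\dd\tau_s\le\big(\sup_x a^2(x)+\sup_x\int_\Rb z^2\,\mu(x,\dd z)\big)\Eb\tau_T<\infty$. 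The Burkholder--Davis--Gundy inequality followed by Jensen then gives $\Eb\sup_{t\le T}|M_t|\le \const\,\Eb\sqrt{[M]_T}\le \const\sqrt{\Eb[M]_T}<\infty$. Summing the three bounds yields $\Eb\sup_{t\le T}|X_t|<\infty$.

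The estimates themselves are routine; the only point requiring care is the passage from the jump measure $\mu^X$ to its compensator $\compensator$ inside the expectation of $[M]_T$ (equivalently, the identity $\Eb[M]_T=\Eb\langle M\rangle_T$) together with the validity of BDG for the a priori merely local martingale $M$. I expect this to be the main obstacle, and I would handle it by a standard localization along stopping times reducing $M$ to an $L^2$ martingale, combined with monotone convergence. The uniform bounds in \eqref{eq:asqexpintegrable} ensure that at every stage the dominating integrand is a constant times $\tau_T$, so the only source of randomness --- including any correlation between the clock $\tau$ and the driving Markov process, which enters solely through $\mu(X_{s-},\dd z)\,\dd\tau_s$ --- is harmless and every bound collapses to a multiple of $\Eb\tau_T$.
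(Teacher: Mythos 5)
Your proof is correct and follows essentially the same route as the paper: decompose $X$ into a predictable drift with bounded rate plus a local martingale $M$, bound $\Eb[M,M]_T$ by a constant times $\Eb\tau_T$ using \eqref{eq:asqexpintegrable}, and conclude with Burkholder--Davis--Gundy. If anything you are slightly more careful than the paper in distinguishing the truncated first characteristic $B$ from the full predictable part of the canonical decomposition and in flagging the localization needed to pass from $[M,M]$ to its compensator.
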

\begin{proof}
Let $B'_t:=B_t +\int_{[0,t]\times\mathbb{R}} (z-h(z)) \nu(\dd u,\dd z)$.
We have $\Eb\sup_{t\in[0,T]}|B'_t|  <\infty$ due to \eqref{eq:asqexpintegrable}  and $\Eb\tau_T<\infty$.

Defining $M_t$ by
\begin{align}
X_t
	&= X_0 + M_t + B'_t ,
\end{align}
we have,
by \citet[Proposition II.2.29]{jacod1987limit},
that $M$ is a local martingale satisfying
\begin{align}
\Eb[M,M]_T
	&= \Eb\int_0^T a^2(X_{s})\dd\tau_s+\Eb\int_0^T\int_\Rb z^2 \mu(X_{s-},\dd z)\dd\tau_s
	< 	\infty ,
\end{align}
because $\Eb\tau_T<\infty$.  By Burkholder-Davis-Gundy, $\Eb\sup_{t\in[0,T]}|M_t|<\infty$, which implies the result.
\end{proof}

\begin{lemma}\label{lem:supexp}
Suppose $\tau_T$ is bounded and $p\in\Rb$ satisfies
\begin{align}\label{eq:pmoment}
\sup_{x\in\Rb}\int_{\Rb} (\ee^{pz}-1-pz) \mu(x,\dd z)
	&< \infty.
\end{align}
Let
\begin{align}
Z_t
	&:= \exp(pX_t-K_t) , \\
K_t
	&:= \int_0^t\half (p^2-p) a^2(X_s)\dd\tau_s+\int_0^t\int_{\Rb}[(\ee^{pz}-1-pz)-p(\ee^{z}-1-z)]\mu(X_{s-},\dd z)\dd\tau_s .
\end{align}
Then $Z$ is a martingale, and
\begin{align}
\Eb \sup_{t\in[0,T]} \exp(pX_t)
	&<	\infty . \label{eq:supl1}
\end{align}
\end{lemma}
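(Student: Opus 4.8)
The plan is to show first that $Z$ is a local martingale by an Itô/generator computation, then prove the supremum bound \eqref{eq:supl1}, and finally use that bound to upgrade $Z$ from a local to a true martingale. The supremum estimate is the real work.

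First I would apply Lemma \ref{lem:special} to $g(x)=\ee^{px}$. Its derivative $g'(x)=p\ee^{px}$ satisfies $\sup_{x}|g'(x)\ee^{-px}|=|p|<\infty$, and the second requirement in \eqref{eq:expcond} is precisely the hypothesis \eqref{eq:pmoment}; hence $\ee^{pX}$ is a special semimartingale, with canonical decomposition
\begin{equation*}
\ee^{pX_t}=\ee^{pX_0}+(\text{local martingale})+\int_0^t \Ac\,\ee^{p\,\cdot}(X_{s-})\,\dd\tau_s .
\end{equation*}
A direct evaluation of $\Ac$ on $g(x)=\ee^{px}$, using the definition \eqref{eq:drift} of $b_\trunc$, shows that $\Ac\,\ee^{p\,\cdot}(x)$ equals $\ee^{px}$ times a bounded function of $x$; taking $K$ to be the $\dd\tau$-integral of that function cancels the finite-variation part, so that $Z=\ee^{pX-K}$, being $\ee^{pX}$ multiplied by the continuous finite-variation process $\ee^{-K}$, is a (nonnegative) local martingale. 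Moreover $K$ is bounded on $[0,T]$: the diffusion coefficient of $\kappa$ is bounded because $a$ is bounded, the jump coefficient is bounded because both $\sup_x\int(\ee^{pz}-1-pz)\,\mu(x,\dd z)$ and $\sup_x\int(\ee^{z}-1-z)\,\mu(x,\dd z)$ are finite (the former by \eqref{eq:pmoment}, the latter by \eqref{eq:asqexpintegrable}), and $\tau_T$ is bounded, so $|K_t|\le C$ for all $t\in[0,T]$.

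It then remains to prove \eqref{eq:supl1}, and once this is done the martingale property follows quickly: since $|K|\le C$, the processes $\ee^{pX_t}$ and $Z_t$ are comparable up to a constant, so \eqref{eq:supl1} gives $\Eb\sup_{t\le T}Z_t<\infty$; thus the nonnegative local martingale $Z$ is of class (D), hence a true martingale. The difficulty is entirely in \eqref{eq:supl1}. The obstacle is that a nonnegative (local) martingale need not have an integrable running maximum, and here the hypothesis controls only a single exponential moment $\Eb\ee^{pX_t}$ (no moment beyond $p$ is available, since $\sup_x\int(\ee^{p'z}-1-p'z)\mu<\infty$ can fail for every $p'>p$). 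Consequently neither an $L^q$ ($q>1$) Doob inequality nor an $L\log L$ maximal bound applies, and a genuinely structural argument is needed.

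For this crux I would split $\mu$ into small jumps ($|z|\le 1$) and large jumps ($|z|>1$). The large jumps have finite total intensity $\Lambda:=\sup_x\mu(x,\{|z|>1\})\le\sup_x\int z^2\mu(x,\dd z)<\infty$, so, since $\tau_T$ is bounded by a constant, the number $N$ of large jumps of $X$ on $[0,T]$ is dominated by a Poisson variable and has exponential moments. Between consecutive large-jump times the continuous part together with the compensated small jumps makes the relevant factor the exponential of a local martingale whose jumps are bounded by $\ee^{|p|}$ and whose quadratic variation is bounded (as $a$ and $\tau_T$ are bounded); this lies in every $L^q$, so its running maximum over an inter-jump interval is controlled by Doob's $L^2$ inequality. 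The multiplicative effect of the large jumps, $\exp\bigl(p\sum_{|\Delta X_s|>1}\Delta X_s\bigr)$, is controlled in expectation by the exponential (Lévy-type) formula using $\sup_x\int_{|z|>1}(\ee^{pz}-1)\mu(x,\dd z)<\infty$, which follows from \eqref{eq:pmoment}. Interlacing these pieces — bounding the global running maximum by the product of the finitely many inter-jump maxima and jump factors, then applying Hölder over the (exponentially integrable) count $N$ — yields $\Eb\sup_{t\le T}\ee^{pX_t}<\infty$. A convenient simplification is to pass to business time: writing $X_t=Y_{\tau_t}$ with $\tau$ increasing and $\tau_T$ bounded by $T'$, one has $\sup_{t\le T}\ee^{pX_t}\le\sup_{u\le T'}\ee^{pY_u}$, reducing the estimate to the driving process $Y$ on a deterministic interval, where the same small/large-jump interlacing is cleaner to carry out. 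I expect this supremum estimate to be the main obstacle; the local-martingale identification and the final uniform-integrability step are routine by comparison.
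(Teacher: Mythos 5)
Your preparatory steps are sound and consistent with the paper: identifying $Z$ as a nonnegative local martingale (the paper does this by writing $Z=\mathcal{E}(M)$ with $M_t=pX^c_t+\int_{[0,t]\times\Rb}(\ee^{pz}-1)\widetilde{N}(\dd z,\dd s)$, citing Kallsen--Shiryaev, rather than by a generator computation), noting that $K$ is bounded on $[0,T]$, and correctly diagnosing that Doob's $L^q$ and $L\log L$ inequalities are unavailable. The gap is in your resolution of the crux, the small-jump/large-jump interlacing with ``H\"older over the count $N$.'' Two things break. First, the large-jump factors $\ee^{p\Delta X_{T_i}}$ admit no moment of order $q>1$: any H\"older or Cauchy--Schwarz step that separates $\prod_i\ee^{p\Delta X_{T_i}}$ from the inter-jump maxima requires $\sup_x\int_{|z|>1}\ee^{qpz}\,\mu(x,\dd z)<\infty$ for some $q>1$, which \eqref{eq:pmoment} does not provide (as you yourself observe, no exponent beyond $p$ is available). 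Second, even for the continuous-plus-small-jump factors alone, H\"older on $\{N=n\}$ forces exponents of order $n$, and the $q$-th moment of an exponential martingale with bracket bounded by $K$ grows like $\ee^{cq^2K}$ (sharp already for Brownian motion); the resulting bound $\ee^{cn^2}$ overwhelms the factorial decay of $\Pb(N=n)$, so the sum over $n$ diverges. The estimate \eqref{eq:supl1} is genuinely borderline -- $L^1$-integrability of the running supremum of an exponential martingale possessing exactly one exponential moment -- and no combination of Doob's inequality and H\"older's inequality reaches it.

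The paper closes precisely this step by invoking the L\'epingle--M\'emin criterion: the process
\begin{align}
p^2\int_0^T a^2(X_s)\,\dd\tau_s+\int_0^T\int_\Rb(\ee^{pz}-1)^2\wedge(\ee^{pz}-1)\ \mu(X_{s-},\dd z)\,\dd\tau_s
\end{align}
is bounded by \eqref{eq:asqexpintegrable}, \eqref{eq:pmoment} and the boundedness of $\tau_T$, and the cited result then yields simultaneously that $Z=\mathcal{E}(M)$ is a true martingale and that $\Eb\sup_{t\in[0,T]}Z_t<\infty$; the boundedness of $K$ transfers the latter to $\ee^{pX}$. If you wish to avoid citing L\'epingle--M\'emin you would have to reprove an estimate of that type from the compensator bound (e.g.\ via an exponential change of measure that absorbs the large-jump factors at first-moment level, combined with iterated conditional second-moment bounds between large jumps); the decomposition you sketch does not substitute for it as written.
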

\begin{proof}
Let $N$ be the integer-valued random measure associated with the jumps of $X$.  Let $\widetilde{N}:=N-\nu$.

By \citet[Theorem 2.19]{kallsen2002esscher},
the process $Z$ is the stochastic exponential of the local martingale
\begin{align}
pX_t^c+\int_{[0,t]\times\Rb} (\ee^{pz}-1)\widetilde{N}(\dd s,\dd z),
\end{align}
where $X^c$ is the continuous martingale part of $X$.  
By the boundedness of $\tau_T$ and assumptions \eqref{eq:asqexpintegrable} and \eqref{eq:pmoment}, it follows that
\begin{align}
p^2\int_0^T a^2(X_s)\dd\tau_s + \int_0^T\int_{\Rb} (\ee^{pz}-1)^2\wedge(\ee^{pz}-1)\ \mu(X_{s-},\dd z) \dd \tau_s
\end{align}
is bounded.  So by \cite{lepingle1978integrabilite}, the process $Z$ is a martingale and $\Eb\sup_{t\in[0,T]} Z_t<\infty$,
which implies \eqref{eq:supl1} because $\sup_{t\in[0,T]}K_t$ is bounded.
\end{proof}

Let us define two conditions that may be satisfied by $(\tau_T,g)$ where $g\in C^{1+}(\Rb)$.  The first is
\begin{align}\label{eq:gtlin}
\Eb\tau_T<\infty\text{ and }\sup_{x\in\Rb}|g'(x)|+\ess\sup_{x\in\Rb}|g''(x)|<\infty,
\end{align}
and the second is
\begin{equation}\label{eq:gtexp}
\tau_T  \text{ is bounded, and $\exists p\in\Rb$ with }
\sup_{x\in\Rb}\int_{\Rb} (\ee^{pz}-1-pz) \mu(x,\dd z) +\ess\sup_{x\in\Rb}\ee^{-px}(|g(x)|+|g'(x)|+|g''(x)|)<\infty.
\end{equation}
\begin{lemma}\label{lem:martingale}
Assume that $g$ is a sum of finitely many $C^{1+}(\Rb)$ functions, each of which satisfies \eqref{eq:gtlin} or \eqref{eq:gtexp}.  Let
\begin{align}
\Gamma_t
	&:= g(X_t)-g(X_0)-\int_0^t\Ac g(X_{s-})\dd\tau_s , &
	&t \in [0,T].
\end{align}
Then $\Gam$ is a martingale.
\end{lemma}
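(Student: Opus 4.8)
The plan is to exhibit $\Gamma$ as the local-martingale part of $g(X)$ in its canonical (special) decomposition, and then to promote it to a true martingale using the integrability afforded by \eqref{eq:gtlin} or \eqref{eq:gtexp}. First I would reduce to a single summand: since $\Ac g$ and $\Gamma_t$ depend linearly on $g$, and a finite sum of martingales is a martingale, it suffices to treat one $C^{1+}(\Rb)$ function satisfying \eqref{eq:gtlin} or \eqref{eq:gtexp}. In either case the relevant hypothesis of Lemma \ref{lem:special} holds (with $p=0$ under \eqref{eq:gtlin}, and with the given $p$ under \eqref{eq:gtexp}, using the remark that \eqref{eq:gtexp} forces $\sup_{x\in\Rb}|g'(x)\ee^{-px}|<\infty$), so $g(X)$ is a special semimartingale and its decomposition into a local martingale plus a predictable finite-variation part is well defined and unique.

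Next I would apply It\^o's formula, in the generalized form valid for $C^{1+}$ integrands whose first derivative is continuous and whose a.e.\ second derivative is bounded, to expand $g(X_t)-g(X_0)$ into a stochastic-integral term, a compensated-jump term, and a predictable drift. Collecting the predictable finite-variation contributions and inserting the characteristics \eqref{eq:semichar}, the $\dd B$ term contributes $b_\trunc(X_{s-})g'(X_{s-})\dd\tau_s$, the continuous quadratic-variation term contributes $\tfrac12 a^2(X_{s-})g''(X_{s-})\dd\tau_s$, and compensating the jumps contributes $\int_\Rb\big(g(X_{s-}+z)-g(X_{s-})-g'(X_{s-})\truncz\big)\mu(X_{s-},\dd z)\dd\tau_s$. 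By the definition \eqref{eq:generator} of $\Ac$ these sum to exactly $\Ac g(X_{s-})\dd\tau_s$, so $\Gamma$ is the local-martingale part of $g(X)$; uniqueness of the special decomposition makes this identification rigorous.

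It remains to upgrade $\Gamma$ from a local to a true martingale, and this is the main obstacle; I would do it by showing $\Eb\sup_{t\in[0,T]}|\Gamma_t|<\infty$, which forces a local martingale on $[0,T]$ to be a uniformly integrable martingale. From $|\Gamma_t|\le|g(X_t)|+|g(X_0)|+\int_0^T|\Ac g(X_{s-})|\dd\tau_s$, I would bound the two state-dependent pieces under each condition. Under \eqref{eq:gtlin} the bounded derivatives make $g$ Lipschitz and, via \eqref{eq:asqexpintegrable} (together with $|z|\le z^2$ for $|z|>1$ and a second-order Taylor bound for $|z|\le 1$), make $|\Ac g|$ bounded; hence $|\Gamma_t|\le \const(1+|X_t|+\tau_T)$, which is integrable by Lemma \ref{lem:supx} and $\Eb\tau_T<\infty$. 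Under \eqref{eq:gtexp}, integrating $|g'(x)|\le\const\ee^{px}$ gives $|g(x+z)-g(x)|\le\const\ee^{px}|\ee^{pz}-1|$ and, after splitting $\Rb$ into $\{|z|\le1\}$ and $\{|z|>1\}$ and using $\int_\Rb(\ee^{pz}-1-pz)\mu(x,\dd z)<\infty$ together with $\int_\Rb z^2\mu(x,\dd z)<\infty$, yields $|\Ac g(x)|\le\const\ee^{px}$; since $\tau_T$ is bounded, $|\Gamma_t|\le\const\sup_{s\le T}\ee^{pX_s}$ up to integrable terms, which lies in $L^1$ by Lemma \ref{lem:supexp}. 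In both cases $\sup_{t\le T}|\Gamma_t|$ is integrable, so $\Gamma$ is a true martingale on $[0,T]$.
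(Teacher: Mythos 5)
Your proposal is correct and follows essentially the same route as the paper: reduce to one summand by linearity, identify $\Gamma$ as the local-martingale part of the special semimartingale $g(X)$ via the generalized It\^o formula for $C^{1+}$ functions (the paper packages this as an extension of \citet[Theorem II.2.42c]{jacod1987limit} using Lemma \ref{lem:special} and \citet[Theorem IV.70]{protter}), and then upgrade to a true martingale by bounding $\Eb\sup_{t\in[0,T]}|\Gamma_t|$ with the same case split, invoking Lemmas \ref{lem:supx} and \ref{lem:supexp} exactly as the paper does.
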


\begin{proof}
We prove for the case that the $g$ satisfies \eqref{eq:gtlin} or \eqref{eq:gtexp}.  The case that $g$ is the sum of such functions follows immediately by linearity.

Either one of the conditions \eqref{eq:gtlin} or \eqref{eq:gtexp} implies that $\Ac g$ is well-defined.

To show that $\Gamma$ is a local martingale, note that
\citet[Theorem II.2.42c]{jacod1987limit}
extends as follows.  They assume $g$ bounded, only to show that $g(X)$ is a special semimartingale, but the conditions in
Lemma \ref{lem:special} suffice for that conclusion.  Moreover they assume $g\in C^2$,
only to use It\^o's lemma, but $C^{1+}$ suffices here, by
\citet[Theorem IV.70]{protter}
and its first corollary.

To show that $\Gamma$ is a true martingale, it suffices,
by \citet[Theorem I.51]{protter},
to show that $\Eb\sup_{t\in[0,T]}|\Gamma_t|<\infty$.
In case \eqref{eq:gtlin}, let $p:=0$.  In both cases, by
\eqref{eq:asqexpintegrable},
we have
\begin{align}
|g'(x)|\int_\Rb (\ee^z-1-z)\mu(x,\dd z)
	&< \const \ee^{px} , \label{eq:b1}
\end{align}
and by Taylor's theorem and $|g''(x+z)|\leq C\ee^{px+|p|}$ for $|z|<1$, we have
\begin{align}
\int_{|z|<1} |g(x+z)-g(x)-g'(x)z| \mu(x,\dd z)\leq\const \ee^{px+|p|}\int_{|z|<1} z^2 \mu(x,\dd z)
	&\leq \const \ee^{px} , \label{eq:b2}
\end{align}
and by \eqref{eq:asqexpintegrable},
\begin{align}
\int_{|z|>1} |g(x+z)-g(x)-g'(x)z|\ \mu(x,\dd z)
	&\leq \const \ee^{px}\int_{|z|>1} (\ee^{pz}+1 + |z|)\ \mu(x,\dd z)
	\leq  \const \ee^{p x} , \label{eq:b3}
\end{align}
where each $\const$ does not depend on $x$.  Combining \eqref{eq:b1}, \eqref{eq:b2}, \eqref{eq:b3}, and the bounds on $g'$ and $g''$, we have
\begin{align}
\sup_{t\in[0,T]}\Big|\int_0^t\Ac g(X_{s-})\dd\tau_s\Big|\leq
\int_0^T |\Ac g(X_{s-})|\dd\tau_s \leq C\tau_T\sup_{t\in[0,T]} \ee^{pX_t} ,
\end{align}
which is integrable in case \eqref{eq:gtlin} because $\Eb\tau_T<\infty$, and in case
\eqref{eq:gtexp} by Lemma \ref{lem:supexp}.
The remaining component of $\Gamma$ has magnitude
\begin{align}
|g(X_t)-g(X_0)|
	&\leq \left\{
		\begin{aligned}
			&C(1+|X_t|) & \text{in case \eqref{eq:gtlin}} , \\
			&C (1+\ee^{pX_t}) &\text{in case \eqref{eq:gtexp}} ,
		\end{aligned} \right.  \label{eq:Gdiff}
\end{align}
which has integrable supremum by Lemmas \ref{lem:supx} and \ref{lem:supexp}.
\end{proof}

In conclusion, we relate $\Eb \, [\log F]_T$ to the value of a European-style contract:

\begin{theorem}
\label{thm:main}
Assume that the forward price $F$, 
the log-price 
$X$, and 
the clock
$\tau$ satisfy the assumptions of Section \ref{sec:assumptions}.  Assume that $G$ is a sum of finitely many $C^{1+}(\Rb)$ functions, each of which satisfies \eqref{eq:gtlin} or \eqref{eq:gtexp}, and that $\Ac \, G$ satisfies (for a.e.\ $x$)
\begin{align}
\Ac \, G(x)
	&=	a^2(x) + \int_\Rb z^2 \mu(x,\dd z). \label{eq:AG=qv}
\end{align}
Then $G$ \textit{prices the variance swap}, meaning that
\begin{align}
\Eb \[ \log F \]_T
	&=	 \Eb \, G( \log F_T ) - G(\log F_0) . \label{eq:payoff.0}
\end{align}
Thus, if $\Pb$ is a martingale measure for VS and $G$ contracts, then the fair strike of the VS (equivalently: the forward price of the floating leg of the VS) is \eqref{eq:payoff.0}.
\end{theorem}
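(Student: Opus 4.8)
The plan is to compute the expected quadratic variation directly from the semimartingale characteristics of $X$ and then recognize the result as the expectation of the martingale $\Gamma$ from Lemma \ref{lem:martingale}. Since $\log F = X$, we have $[\log F]_T = [X]_T$, and the standard decomposition of the quadratic variation of a semimartingale into its continuous and jump parts gives
\begin{align}
[X]_T = \langle X^c \rangle_T + \sum_{s \le T} (\Delta X_s)^2 = \int_0^T a^2(X_{s-}) \dd\tau_s + \int_{[0,T]\times\Rb} z^2\, N(\dd z, \dd s),
\end{align}
where $N$ is the jump measure of $X$ and the continuous part $\langle X^c\rangle = A$ is read off from \eqref{eq:semichar}.

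The next step is to take expectations. The hypotheses force $\Eb\tau_T < \infty$: each summand of $G$ satisfies either \eqref{eq:gtlin}, which gives $\Eb\tau_T<\infty$ directly, or \eqref{eq:gtexp}, which bounds $\tau_T$. Combined with $\sup_{x}\int_\Rb z^2 \mu(x,\dd z) < \infty$ from \eqref{eq:asqexpintegrable}, the integrand $z^2$ is integrable against the compensator $\nu(\dd z,\dd s) = \mu(X_{s-},\dd z)\dd\tau_s$, so replacing $N$ by $\nu$ under the expectation yields
\begin{align}
\Eb[X]_T = \Eb\int_0^T \Big( a^2(X_{s-}) + \int_\Rb z^2 \mu(X_{s-},\dd z) \Big) \dd\tau_s.
\end{align}
By the defining equation \eqref{eq:AG=qv}, the bracketed integrand equals $\Ac G(X_{s-})$ for a.e.\ $x$, whence $\Eb[X]_T = \Eb\int_0^T \Ac G(X_{s-}) \dd\tau_s$.

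Finally, I would invoke Lemma \ref{lem:martingale}: the assumptions placed on $G$ are precisely those under which $\Gamma_t = G(X_t) - G(X_0) - \int_0^t \Ac G(X_{s-}) \dd\tau_s$ is a martingale. Since $\Gamma_0 = 0$, taking expectations gives $\Eb\int_0^T \Ac G(X_{s-})\dd\tau_s = \Eb G(X_T) - G(X_0)$. Chaining the equalities produces
\begin{align}
\Eb[\log F]_T = \Eb G(X_T) - G(X_0) = \Eb G(\log F_T) - G(\log F_0),
\end{align}
which is \eqref{eq:payoff.0}; the martingale-pricing interpretation asserted in the final sentence of the theorem then follows at once.

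I expect the only genuine obstacle to be the second step: justifying that the expectation of the jump part of $[X]_T$ equals the integral of $z^2$ against the compensator. This is the passage from $N$ to $\nu$, and it amounts to checking that $z^2$ is $\nu$-integrable so that $\int z^2\, \widetilde{N}$ is a true, not merely local, martingale — which is exactly where the uniform bound $\sup_x\int_\Rb z^2 \mu(x,\dd z) < \infty$ together with $\Eb\tau_T < \infty$ does the work. A secondary point worth recording is that $\Eb G(\log F_T)$ is finite; this follows from the bound \eqref{eq:Gdiff} and Lemmas \ref{lem:supx} and \ref{lem:supexp}, already assembled in the proof of Lemma \ref{lem:martingale}.
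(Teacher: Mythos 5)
Your proposal is correct and follows essentially the same route as the paper: decompose $[X]_T$ into its continuous part and the sum of squared jumps, pass from the jump measure $N$ to its compensator under the expectation (the paper cites Jacod--Shiryaev Theorems I.4.52 and II.1.8 for these two steps), substitute the OIDE \eqref{eq:AG=qv}, and close with the martingale property of $\Gamma$ from Lemma \ref{lem:martingale}. Your added remarks --- that the hypotheses on $G$ already force $\Eb\tau_T<\infty$, and that the $N$-to-$\nu$ step is where the uniform bound $\sup_x\int_\Rb z^2\mu(x,\dd z)<\infty$ does the work --- are accurate fillings-in of details the paper leaves to the citations.
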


\begin{remark}
The sum of finitely many functions is more general than a single function; for instance, $G$ may be the sum of two functions, one satisfying \eqref{eq:gtexp} for some $p>0$, and the other for some $p<0$.
\end{remark}

\begin{remark}\label{rem:addconstants}
Functions $G$ that satisfy the conditions of Theorem \ref{thm:main}, and therefore price the VS, are not unique.  Indeed, if $G$ does, then so does $G(\cdot)+C_0+C_1\exp(\cdot)$, where $C_0, C_1$ are any constants.   Adding the latter two terms does not
affect the valuation $\Eb G(\log F_T)-G(\log F_0)$, because $\Eb F_T=F_0$.  
\end{remark}

\begin{proof}[Proof of Theorem \ref{thm:main}]
We have
\begin{align}
\Eb \, [X]_T
	&=	\Eb \Big(\int_0^{T}  a^2(X_{t}) \dd\tau_t + \int_0^{T} \int_\Rb z^2 N(\dd t,\dd z)\Big)\label{eq:qvjumps}\\
	&=	\Eb \int_0^{T} \( a^2(X_{t-}) + \int_\Rb z^2 \mu(X_{t-},\dd z) \) \dd\tau_t \label{eq:qvcompensator} \\
	&=	\Eb \int_0^{T} \Ac G(X_{t-})\dd\tau_t\label{eq:useoide}\\
	&=	\Eb \, G(X_{T}) - G(X_0) \vphantom{\int_0^{\tau_T^n}}\label{eq:dynkin}
\end{align}
by \citet[Theorems I.4.52 and II.1.8]{jacod1987limit},
equation \eqref{eq:AG=qv} and Lemma \ref{lem:martingale}.
\end{proof}

Theorem \ref{thm:main} allows us to value a VS relative to the $T$-maturity implied volatility smile as follows:
\begin{align}
\underbrace{ \Eb \[ \log F \]_T }_{\text{A}}	
	&=	\underbrace{ \Eb \, G( \log F_T )  }_{\text{B}} - \underbrace{ G(\log F_0) }_{\text{C}} . \label{eq:payoff}
\end{align}
A $=$ the amount agreed upon at time $0$ to pay at time $T$ when taking the long side of a variance swap.\\
B $=$ the value of a European contract with payoff $G(\log F_T)$.\\
C $=$ the value of $G(\log F_0)$ zero-coupon bonds.\\
As shown in \cite{carrmadan1998}, if $h$ is a difference of convex functions, then for any $\kappa \in \Rb^+$ we have
\begin{align}
h(F_T)
	&=		h(\kappa) + h'(\kappa) \Big( (F_T - \kappa)^+ - (\kappa-F_T)^+ \Big)
				+ \int_0^\kappa h''(K)(K-F_T)^+ \dd K + \int_\kappa^\infty h''(K)(F_T-K)^+ \dd K . \label{eq:h.temp}
\end{align}
Here, $h'$ is the left-derivative of $h$, and $h''$ is the second derivative, which exists as a generalized function.
Taking expectations,
\begin{align}
\Eb \, h(F_T)
	&=		h(\kappa) + h'(\kappa) \Big( C(T,\kappa) - P(T,\kappa) \Big)
				+ \int_0^\kappa h''(K)P(T,K)\dd K + \int_\kappa^\infty h''(K)C(T,K) \dd K , \label{eq:h.call.put}
\end{align}
where $P(T,K)$ and $C(T,K)$ are, respectively, the prices of put and call options on $F$ with strike $K$ and expiry $T$.  Knowledge of $F_0$ and the $T$-expiry smile implies knowledge of the initial prices of $T$-expiry European options at all strikes $K > 0$.  Thus the quantity B in \eqref{eq:payoff} is uniquely determined from the $T$-expiry volatility smile by applying \eqref{eq:h.call.put} to $h = G \circ \log$, assuming one can determine the function $G$.  Therefore, to price a VS relative to co-terminal calls and puts, what remains is to find a solution $G$ of the OIDE  \eqref{eq:AG=qv}.

%
%

\section{Examples}
\label{sec:examples}

In this section we provide examples, in the setting of Section \ref{sec:tcsde}, of local variance and L\'evy kernel pairs $(a^2,\mu)$, such that solutions $G$ of OIDE \eqref{eq:AG=qv} can be obtained explicitly.  
In one of the examples, moreover, we investigate the ratio between the values of the VS and the log contract.



\subsection{Constant relative jump intensity}
\label{sec:proportional}
\begin{theorem}
\label{case:proportional}
Assume the local variance $a^2(x)$ and L\'evy kernel $\mu(x,\dd z)$ are of the form
\begin{align}
a^2(x)
	&= 		\gam^2(x) \, \sig^2 , &
\mu(x,\dd z)
	&=		\gam^2(x) \, \nu(\dd z) ,
\end{align}
where $\sig \geq 0$ is a constant, $\nu$ is a L\'evy measure, and $\gam$ is a positive bounded Borel function.
Assume $\Eb\tau_T<\infty$.  Then
\begin{align}
G(x)
	&:=		-Q \, x  , \label{eq:G.prop}
\end{align}
prices the variance swap, where
\begin{align}
Q	&:=		\frac{\sig^2 + \moment_2}{\sig^2/2 + \expmoment_0} , &
\expmoment_0	&:=		\int_\Rb ( \ee^z-1-z ) \nu(\dd z) , &
\moment_2	&:=		\int_\Rb  z^2 \nu(\dd z), \label{eq:Q}
\end{align}
\end{theorem}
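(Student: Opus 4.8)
The plan is to verify the two hypotheses of Theorem~\ref{thm:main} for the affine candidate $G(x) = -Q\,x$ and then read off \eqref{eq:payoff.0}. Concretely, I must check (i) that $G$ lies in the admissible class (a finite sum of $C^{1+}(\Rb)$ functions, each satisfying \eqref{eq:gtlin} or \eqref{eq:gtexp}), and (ii) that $G$ solves the OIDE \eqref{eq:AG=qv}. Both are direct, since $G$ is as regular as a function can be.

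For (i), I would observe that $G\in C^\infty(\Rb)\subset C^{1+}(\Rb)$, with $G'(x)=-Q$ bounded and $G''(x)\equiv 0$ bounded. Together with the standing hypothesis $\Eb\tau_T<\infty$, this is exactly condition \eqref{eq:gtlin}. Hence $G$ is admissible already as a single function (there is no need to split it into a sum), and in particular Lemma~\ref{lem:martingale} applies to it.

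For (ii), I would compute $\Ac G$ directly from \eqref{eq:generator}. Since $G''=0$ and $G(x+z)-G(x)+(1-\ee^z)G'(x)=Q(\ee^z-1-z)$, the diffusion and jump pieces collapse to
\begin{align}
\Ac G(x)
  = Q\Big( \tfrac{1}{2} a^2(x) + \int_\Rb (\ee^z - 1 - z)\, \mu(x,\dd z) \Big).
\end{align}
Substituting the proportional forms $a^2(x)=\gam^2(x)\sig^2$ and $\mu(x,\dd z)=\gam^2(x)\nu(\dd z)$ gives $\Ac G(x)=Q\,\gam^2(x)\big(\tfrac{1}{2}\sig^2+\expmoment_0\big)$, whereas the right-hand side of \eqref{eq:AG=qv} is $a^2(x)+\int_\Rb z^2\mu(x,\dd z)=\gam^2(x)(\sig^2+\moment_2)$. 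Because $\gam>0$, dividing by $\gam^2(x)$ reduces \eqref{eq:AG=qv} to the single scalar identity $Q(\tfrac{1}{2}\sig^2+\expmoment_0)=\sig^2+\moment_2$, which is solved by precisely the value of $Q$ in \eqref{eq:Q}. This verifies (ii) and completes the application of Theorem~\ref{thm:main}.

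There is no genuine obstacle here: the computation is routine and the entire content is the algebraic identification of $Q$. The only point deserving a word of care is that $Q$ be well defined, i.e.\ that the denominator $\tfrac{1}{2}\sig^2+\expmoment_0$ be strictly positive. Since $\ee^z-1-z\geq 0$ with equality only at $z=0$, both $\sig^2/2$ and $\expmoment_0$ are nonnegative, and the denominator vanishes only in the fully degenerate case $\sig=0$, $\nu\equiv 0$, in which $X$ carries neither diffusive nor jump variation and the variance swap rate is trivially zero; I would simply note that this case is excluded (or dispose of it separately).
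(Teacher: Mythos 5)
Your proposal is correct and follows exactly the route the paper intends: the paper's proof is the single line ``one can verify directly that $G$ satisfies \eqref{eq:gtlin} and \eqref{eq:AG=qv},'' and you have simply carried out that verification, correctly reducing \eqref{eq:AG=qv} to the scalar identity $Q(\sig^2/2+\expmoment_0)=\sig^2+\moment_2$ after dividing by $\gam^2(x)$. Your closing observation about the degeneracy of the denominator when $\sig=0$ and $\nu\equiv 0$ is a reasonable extra precaution that the paper leaves implicit.
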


\begin{proof}
One can verify directly that $G$ in \eqref{eq:G.prop} satisfies \eqref{eq:gtlin} and \eqref{eq:AG=qv}.
\end{proof}
\begin{remark}
In particular, the constant $Q$
in two extreme cases 
is
as follows
\begin{align}
\text{No Jumps $(\nu\equiv 0)$}:&& Q&=2 , \label{eq:no.jumps}\\
\text{Pure Jumps $(\sig=0)$}:   && Q&=\moment_2/\expmoment_0 . \label{eq:pure.jumps}
\end{align}
\end{remark}

\begin{remark}
\label{rmk:tc}
Dynamics of this form arise by time-changing a \levy process $Y_u$ using the clock
\begin{equation}
\tau_t := \inf\Big\{u\geq 0: \int_0^u \frac{1}{\gamma^2(Y_v)}\dd v\geq t\Big\} .
\end{equation}
See, for instance,
\citet[Proposition 11.6.1]{kuchler1997exponential}.
Thus the payoff function \eqref{eq:G.prop} in this case should, and indeed does, match the payoff function obtained by \cite{carr2011variance} for time-changed L\'evy processes.
\end{remark}


\subsection{Fractional linear relative jump intensity}
\label{sec:fractional}
Let $\alpha, \beta, z_0\in\Rb$ satisfy
\begin{align}
z_0
	&<	0	, &
	&\text{and}&
0
	&<\beta<1-\frac{2(\ee^{z_0}-z_0-1)}{z_0^2}.
\end{align}
Let
\begin{align}
\gamma_3
	&:= -\frac{\alpha}{2\beta}-\frac{1}{\beta} , &
\gamma_0
	&:= -\frac{\alpha}{2\beta}+\frac{z_0^2}{2(\ee^{z_0}-z_0-1)}(1-\frac{1}{\beta}) < \gamma_3 .
\end{align}
Let $\gamma_1$ and $\gamma_2$ satisfy $\gamma_0<\gamma_1<\gamma_2<\gamma_3$.

Define the $C^1$ function
\begin{align}
G(x)
	&:=	\left\{ \begin{aligned}
				&\alpha \gamma_1+\beta \gamma_1^2+(x-\gamma_1)(\alpha+2\beta \gamma_1) 	
				& x<\gamma_1 , \\
				&\alpha x+\beta x^2
				& \gamma_1\leq x\leq \gamma_2 , \\
				&\alpha \gamma_2+\beta \gamma_2^2+(x-\gamma_2)(\alpha+2\beta \gamma_2)
				& x>\gamma_2 .
			\end{aligned} \right. \label{eq:gdef}
\end{align}
We can and do take $\partial^2 G(x) = 2\beta\mathbf{1}_{x\in[\gamma_1,\gamma_2]}$ in the sense of Theorem \ref{thm:main}.

Let $a$ be a positive, bounded, Borel function, and let
\begin{align}
c(x)
	&:=	\frac{a^2(x)}{2} \times \frac{\partial^2 G(x) - \partial G(x) - 2 }{G(x)-G(x+z_0)+(\ee^{z_0}-1)\partial G(x)+z_0^2} . \label{eq:cdef}
\end{align}

\begin{lemma}
The function $c$ is positive and bounded.
\end{lemma}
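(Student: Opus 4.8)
The plan is to factor $c$, as defined in \eqref{eq:cdef}, into the positive bounded prefactor $a^2(x)/2$ and the ratio $N(x)/D(x)$, where $N(x):=\partial^2 G(x)-\partial G(x)-2$ is the numerator and $D(x):=G(x)-G(x+z_0)+(\ee^{z_0}-1)\partial G(x)+z_0^2$ is the denominator. Since $a$ is positive and bounded, it suffices to show that $N$ and $D$ are each \emph{strictly positive}, \emph{bounded above}, and \emph{bounded below away from $0$}; the ratio of two such functions is then positive and bounded, and multiplying by $a^2(x)/2$ finishes the argument.

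For the numerator I would differentiate the piecewise definition \eqref{eq:gdef}: on $[\gamma_1,\gamma_2]$ one has $\partial^2 G=2\beta$ and $\partial G(x)=\alpha+2\beta x$, while on the two linear branches $\partial^2 G=0$ and $\partial G$ is the constant $\alpha+2\beta\gamma_1$ (for $x<\gamma_1$) or $\alpha+2\beta\gamma_2$ (for $x>\gamma_2$). Using the defining identity $2\beta\gamma_3=-\alpha-2$, these collapse to $N(x)=2\beta(\gamma_3+1-x)$ on the middle piece and $N\equiv 2\beta(\gamma_3-\gamma_1)$, $N\equiv 2\beta(\gamma_3-\gamma_2)$ on the outer pieces. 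Because $\gamma_1,\gamma_2<\gamma_3$ and $\beta>0$, every branch is strictly positive; $N$ is affine on a bounded interval and constant outside, hence bounded above, and its infimum over $\Rb$ equals $2\beta(\gamma_3-\gamma_2)>0$.

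The denominator is the crux. First note $D$ is continuous because $G\in C^1$, and that $D$ is eventually constant: for $x\le\gamma_1$ both $x$ and $x+z_0$ lie on the left branch, so $D\equiv D_L:=s_1(\ee^{z_0}-1-z_0)+z_0^2$ with $s_1:=\alpha+2\beta\gamma_1$, and for $x\ge\gamma_2-z_0$ both lie on the right branch, so $D\equiv D_R:=s_2(\ee^{z_0}-1-z_0)+z_0^2$ with $s_2:=\alpha+2\beta\gamma_2$ (note $\ee^{z_0}-1-z_0>0$). A short computation from the definition of $\gamma_0$ gives $\alpha+2\beta\gamma_0=-z_0^2(1-\beta)/(\ee^{z_0}-z_0-1)$, and since $\gamma_0<\gamma_1<\gamma_2$ we get $s_1,s_2>\alpha+2\beta\gamma_0$, whence $D_L,D_R>\beta z_0^2>0$. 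It then remains to show $D>0$ on the compact interval $[\gamma_1,\gamma_2-z_0]$, which I would do by computing $D$ explicitly on each subinterval cut out by the four breakpoints $\gamma_1,\gamma_2,\gamma_1-z_0,\gamma_2-z_0$. When $x$ and $x+z_0$ both lie in the middle, one finds $D(x)=2\beta(\ee^{z_0}-1-z_0)(x-\gamma_0)$, which is positive precisely because $x>\gamma_1>\gamma_0$; this is the structural role of $\gamma_0$, namely it is the unique zero of $D$ in the all-quadratic regime. When $x$ is in the middle and $x+z_0$ on the left, $D$ is an upward parabola in $u:=x-\gamma_1$ with $D(0)=D_L$, whose global minimum (by completing the square) is $D_L-\beta(\ee^{z_0}-1)^2>\beta\bigl(z_0^2-(\ee^{z_0}-1)^2\bigr)>0$, using the elementary inequality $|\ee^{z_0}-1|<|z_0|$ (equivalent to $\ee^{z_0}>1+z_0$). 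When $x$ is on the right and $x+z_0$ in the middle, $D$ is a concave parabola, so its minimum over a bounded interval is attained at an endpoint, and by continuity those endpoints agree with positive values of the neighbouring pieces; the remaining affine piece (where $x$ is right and $x+z_0$ left, which arises only if $|z_0|>\gamma_2-\gamma_1$) is increasing and hence bounded below by its positive left endpoint. Thus $D>0$ on $[\gamma_1,\gamma_2-z_0]$, and together with the positive constants $D_L,D_R$ outside, continuity yields a positive lower bound and a finite upper bound for $D$ on all of $\Rb$.

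The main obstacle is the denominator: the sign and the convexity of $D$ change as $x$ crosses the breakpoints, so the argument must be organized by the relative positions of $x$ and $x+z_0$ against $\gamma_1,\gamma_2$, and the delicate point is recognizing that the hypotheses $\gamma_0<\gamma_1<\gamma_2<\gamma_3$ together with $0<\beta<1$ are exactly calibrated — through the identity $\alpha+2\beta\gamma_0=-z_0^2(1-\beta)/(\ee^{z_0}-z_0-1)$ — to keep $D$ bounded away from its zero at $\gamma_0$. Once positivity, boundedness above, and a positive lower bound are in hand for both $N$ and $D$, the ratio $N/D$ is positive and bounded, and so is $c=\tfrac12 a^2\,N/D$.
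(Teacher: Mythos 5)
Your proof is correct and follows essentially the same strategy as the paper's: a case analysis over the breakpoints showing the numerator is positive and bounded above while the denominator is bounded below by a positive constant, both anchored on the same key inequality $\alpha+2\beta\gamma_i>\alpha+2\beta\gamma_0=-(1-\beta)z_0^2/(\ee^{z_0}-z_0-1)$ for $i=1,2$, which yields the threshold $\beta z_0^2$. The only real divergence is in the transition regions: where the paper bounds the denominator on $(\gamma_1,\gamma_2)$ and $(\gamma_2,\gamma_2-z_0)$ via convexity of $G$ and the uniform Taylor-remainder estimate $|\partial^2G|\leq 2\beta$, you compute the exact piecewise formulas (identifying $\gamma_0$ as the root of the all-quadratic piece, completing the square on the convex piece, and using concavity/monotonicity on the remaining pieces) --- slightly longer but equally valid, and somewhat more transparent about why $\gamma_0<\gamma_1$ is the binding constraint.
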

\begin{proof}  To show that the denominator $G(x)-G(x+z_0)+(\ee^{z_0}-1)\partial G(x)+z_0^2$ from \eqref{eq:cdef} has a positive lower bound, first note that
\begin{align}\label{eq:denomineq}
(\ee^{z_0}-1-z_0) G'(\gamma_2)+z_0^2 > (\ee^{z0}-1-z_0) G'(\gamma_1)+z_0^2 = (\ee^{z0}-1-z_0)(\alpha+2\beta \gamma_1)+z_0^2 > \beta z_0^2 ,
\end{align}
where the first two expressions are the denominator for $x>\gamma_2-z_0$ and $x<\gamma_1$ respectively.

For $x\in(\gamma_2,\gamma_2-z_0)$, the denominator is bounded below by
$-\half\sup_{x\in\Rb}|\partial^2G(x)|z_0^2+(\ee^{z0}-1-z_0) G'(\gamma_2)+z_0^2$, so just subtract $\beta z_0^2$ from \eqref{eq:denomineq}.
For $x\in(\gamma_1,\gamma_2)$ the denominator is bounded below by
\begin{align}
(1-\beta)z_0^2+(\alpha+2\beta x)(\ee^{z_0}-1-z_0) > (1-\beta)z_0^2+(\alpha+2\beta\gamma_1)(\ee^{z_0}-1-z_0) > 0.
\end{align}

Next, to show that the numerator $\partial^2G-\partial G-2$ from \eqref{eq:cdef} is positive and bounded, we verify in three intervals.
For $x\in(\gamma_1,\gamma_2)$, the numerator is
$2\beta -\alpha - 2-2\beta x > 2\beta -\alpha - 2-2\beta \gamma_3 = 2\beta > 0$, and is moreover bounded above.
In the other two intervals, the result follows from
\begin{align}
-\alpha-2\beta \gamma_1 - 2 > -\alpha - 2\beta \gamma_2-2 > -\alpha - 2\beta \gamma_3-2 = 0,
\end{align}
where the first two expressions are the numerator for $x\leq \gamma_1$ and $x\geq \gamma_2$ respectively.
\end{proof}

\begin{theorem}
Assume the local L\'evy kernel $\mu$ is a point mass at $z_0$ with weight $c(x)$:
\begin{align}
\mu(x,\cdot)
	&=		c(x)\delta_{z_0},
\end{align}
where $c$, $G$, and the local variance $a^2$ are related by \eqref{eq:gdef} and \eqref{eq:cdef}.	
Assume $\Eb\tau_T<\infty$.
Then $G$ prices the variance swap.
\end{theorem}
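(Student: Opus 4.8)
The plan is to verify the three hypotheses of Theorem~\ref{thm:main} and then invoke it directly. Since we work in the setting of Section~\ref{sec:tcsde}, where time-changing the SDE solution is already shown to produce characteristics of the form \eqref{eq:semichar} once the integrability bounds hold, the first task is to confirm \eqref{eq:asqexpintegrable}. Because $\mu(x,\cdot)=c(x)\delta_{z_0}$ is a single point mass, the jump integrals collapse to $\int_\Rb z^2\mu(x,\dd z)=c(x)z_0^2$ and $\int_\Rb(\ee^z-1-z)\mu(x,\dd z)=c(x)(\ee^{z_0}-1-z_0)$; both are finite and bounded uniformly in $x$ because $c$ is bounded (by the preceding Lemma) and $a$ is bounded by hypothesis. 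Hence the assumptions of Section~\ref{sec:assumptions} are met.

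Next I would check that $G$ fits the framework of Theorem~\ref{thm:main} through condition \eqref{eq:gtlin}. From \eqref{eq:gdef}, $G$ is $C^1$, quadratic on $[\gamma_1,\gamma_2]$ and linear on each tail, so $G'=\alpha+2\beta x$ on the middle interval and is constant on each of $\{x<\gamma_1\}$ and $\{x>\gamma_2\}$, giving $\sup_{x}|G'(x)|<\infty$. Moreover $\d^2 G=2\beta\mathbf{1}_{[\gamma_1,\gamma_2]}$ is bounded, so $G\in C^{1+}(\Rb)$ with $\ess\sup_x|G''(x)|<\infty$. Together with the standing hypothesis $\Eb\tau_T<\infty$, this is exactly \eqref{eq:gtlin}, so $G$ is a (single) function of the required type.

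The core step is the OIDE \eqref{eq:AG=qv}, which here reads $\Ac G(x)=a^2(x)+c(x)z_0^2$ for a.e.\ $x$. Inserting the point mass into the second form of the generator \eqref{eq:generator} gives
\begin{align}
\Ac G(x)=\frac{a^2(x)}{2}\bigl(G''(x)-G'(x)\bigr)+c(x)\bigl(G(x+z_0)-G(x)+(1-\ee^{z_0})G'(x)\bigr),
\end{align}
and I would rearrange the target identity into the equivalent form
\begin{align}
\frac{a^2(x)}{2}\bigl(G''(x)-G'(x)-2\bigr)=c(x)\bigl(G(x)-G(x+z_0)+(\ee^{z_0}-1)G'(x)+z_0^2\bigr).
\end{align}
But this is precisely the definition \eqref{eq:cdef} of $c$, cleared of its denominator, so the identity holds automatically wherever $G''$ is defined, i.e.\ for a.e.\ $x$. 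With \eqref{eq:gtlin} and \eqref{eq:AG=qv} established, Theorem~\ref{thm:main} yields that $G$ prices the variance swap.

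The main obstacle is not the algebra, which is forced by the reverse-engineered choice of $c$, but the bookkeeping that makes the pieces admissible. Specifically, one must ensure $\Ac G$ (hence $c$) is genuinely well-defined: this is where the preceding Lemma is essential, since its positive lower bound on the denominator of \eqref{eq:cdef} legitimizes defining $c$ pointwise and keeps it bounded. One must also confirm that the two kinks of $G$ at $\gamma_1,\gamma_2$ are harmless, namely that taking $\d^2 G=2\beta\mathbf{1}_{[\gamma_1,\gamma_2]}$ in the $C^{1+}$ sense is consistent with the ``for a.e.\ $x$'' clause in Theorem~\ref{thm:main}, so that the a.e.\ identity above is exactly what the theorem requires.
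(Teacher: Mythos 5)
Your proposal is correct and follows the same route as the paper: the paper's own proof is exactly the observation that $G$ satisfies \eqref{eq:gtlin} and that the definition \eqref{eq:cdef} of $c$ is the OIDE \eqref{eq:AG=qv} rearranged, so Theorem \ref{thm:main} applies. Your added bookkeeping (boundedness of the jump integrals via the boundedness of $c$, and the a.e.\ interpretation of $\d^2 G$ at the kinks) is accurate and merely makes explicit what the paper leaves implicit.
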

\begin{proof}
We have that $G$ satisfies \eqref{eq:gtlin} and, by \eqref{eq:cdef}, the OIDE \eqref{eq:AG=qv}.
\end{proof}

\begin{remark}
We describe these dynamics as ``fractional linear relative jump intensity'' because, for $x\in(\gamma_1-z_0,\gamma_2)$, the relative jump intensity
\begin{align}
\frac{c(x)}{a^2(x)}=\frac{\beta-\alpha/2-1-\beta x}{\alpha(\ee^{z_0}-1-z_0)+(1-\beta)z_0^2+2\beta x(\ee^{z_0}-1-z_0)}
\end{align}
is a ratio of polynomials linear in the underlying log-price.
\end{remark}

\subsection{\levy mixture with state-dependent weights}
\label{sec:unified}
Assume the local variance $a^2(x)$ and L\'evy kernel $\mu(x,\dd z)$ are of the form
\begin{align}
a^2(x)
	&=		\alpha \sig_0^2(x) + \del \beta \sig_1^2(x) , &
\mu(x,\dd z)
	&=		\frac{\sig_0^2(x)}{2} \nu_0(\dd z)
				+ \del \frac{\sig_1^2(x)}{2} \nu_1(\dd z) , &
\frac{\sig_1^2(x)}{\sig_0^2(x)}
	&=	\ee^{cx}	=:	\ee_c(x) ,	\label{eq:setting.d}
\end{align}
where $\alpha,\beta,\del \geq 0$ and $\nu_0, \nu_1$ are \levy measures with
\begin{align}
\int_\Rb \Big|\ee^{\lam z} - 1 + (1-\ee^z) \lam \Big|\ \nu_i(\dd z)
	&< \infty , &
\forall \lam
	&\in \Cb , &
i
	&\in \{0, 1\} .  \label{eq:integrability}
\end{align}
Let us first derive a candidate solution $G$ to \eqref{eq:AG=qv} from an ansatz and then verify the validity of the solution.

Inserting the expressions for $a^2$ and $\mu$ 
from
\eqref{eq:setting.d} into \eqref{eq:AG=qv} and dividing by $\frac{1}{2}\sig_0^2(x)$, we have
\begin{align}
( \Ac_0  + \del \ee_c \Ac_1 ) G
	&=	I_0 + \del \ee_c I_1 , \label{eq:G.OIDE}
\end{align}
where $I_0$ and $I_1$ are constants defined 
by
\begin{align}
I_0
	&=	2 \alpha + \int_\Rb z^2\ \nu_0(\dd z), &
I_1
	&=	2 \beta + \int_\Rb z^2\ \nu_1(\dd z) ,
\end{align}
and, using the notation of \eqref{eq:A.first}, the operators $\Ac_0$ and $\Ac_1$ are given by
\begin{align}
\Ac_0
	&=	\alpha \( \d^2 - \d \) + \int_\Rb \Big(\ee^{z\d} - 1 +(1-\ee^z) \d \Big) \nu_0(\dd z)  , \\
\Ac_1
	&=	\beta \( \d^2 - \d \) + \int_\Rb \Big(\ee^{z\d} - 1 +(1-\ee^z) \d \Big) \nu_1(\dd z) .
\end{align}
Assume a solution $G$ of \eqref{eq:G.OIDE} has a power series expansion in $\del$:
\begin{align}
G
	&=	\sum_{n=0}^\infty \del^n G_n , \label{eq:G.expand}
\end{align}
where the functions $\{ G_n \}_{n \geq 0}$ are, at this point, unknown.  Inserting the expansion \eqref{eq:G.expand} into OIDE \eqref{eq:G.OIDE} and collecting terms of like order in $\del$ results in the following sequence of nested OIDEs:
\begin{align}
\Oc(1):&&
\Ac_0 G_0
	&=	I_0 , \label{eq:G0.OIDE}  \\
\Oc(\del):&&
\Ac_0 G_1 + \ee_c \Ac_1 G_0
	&=	\ee_c I_1 , \label{eq:G1.OIDE} \\
\Oc(\del^n):&&
\Ac_0 G_n + \ee_c \Ac_1 G_{n-1}
	&=	0 , &
n
	&\geq 2 . \label{eq:Gn.OIDE}
\end{align}
Noting that
\begin{align}
\Ac_0 \ee_{\lam}
	&=	\phi_\lam \ee_{\lam} , &
\phi_\lam
	&=	\alpha \( \lam^2 - \lam \) + \int_\Rb \Big(\ee^{\lam z} - 1 + (1-\ee^z) \lam \Big) \nu_0(\dd z) , &
\forall \lam
	&\in \Cb , \\
\Ac_1 \ee_{\lam}
	&=	\chi_\lam \ee_{\lam} , &
\chi_\lam
	&=	\beta \( \lam^2 - \lam \) + \int_\Rb \Big(\ee^{\lam z} - 1 + (1-\ee^z) \lam \Big) \nu_1(\dd z), &
\forall \lam
	&\in \Cb ,
\end{align}
one can easily verify, by direct substitution 
into \eqref{eq:G1.OIDE},
a solution $G_0$ given by
\begin{align}
G_0(x)
	&:=	- Q_0 x , &
Q_0
	&:=	\frac{2 \alpha + \int_\Rb  z^2 \nu_0(\dd z)}{ \alpha + \int_\Rb (\ee^z - 1 - z )\nu_0(\dd z)} , \label{eq:G0} 
\end{align}
and solutions $\{ G_n \}_{n \geq 1}$ given, for $c\neq 0$, by
\begin{align}
G_n(x)
	&:=	Q_1 \frac{\ee_{nc}(x)}{\phi_{nc}} \prod_{k=1}^{n-1} \frac{-\chi_{kc}}{\phi_{kc}}, &
Q_1
	&:=	2 \beta + \int_\Rb z^2 \nu_1(\dd z) - Q_0 \Big( \beta + \int_\Rb (\ee^z - 1 -z ) \nu_1(\dd z) \Big). \label{eq:Gn}
\end{align}
Thus we have a formal series expansion, defined by \eqref{eq:G.expand}, \eqref{eq:G0} and \eqref{eq:Gn}, for a function $G$ that solves OIDE \eqref{eq:AG=qv}.  The following conditions
suffice for validity of this expansion.
\begin{theorem}
\label{thm:unified}
Assume that the local variance $a^2(x)$ and L\'evy kernel $\mu(x,\dd z)$ are given by \eqref{eq:setting.d}.  Assume further that $\nu_0$ and $\nu_1$ satisfy \eqref{eq:integrability} and $c\neq 0$ and
\begin{align}
	\lim_{n \to \infty } \frac{ \beta n^2c^2 + \int_\Rb \nu_1(\dd z)\(\ee^{ncz}-1 + (1-\ee^z) nc\)}
														{ \alpha n^2c^2 + \int_\Rb \nu_0(\dd z)\(\ee^{(n+1)cz} - 1 + (1-\ee^z) (n+1)c\) }
	&= 0 . \label{eq:nu.condition}
\end{align}
Then the function $G$
is well-
defined on $\mathbb{R}$ by \eqref{eq:G.expand} with \eqref{eq:G0} and \eqref{eq:Gn}, 
and  
solves OIDE \eqref{eq:AG=qv}.
\end{theorem}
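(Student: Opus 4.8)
My plan is to treat the statement as a convergence-and-verification claim: the formal series $G=\sum_{n\ge0}\del^n G_n$ produced by the ansatz must be shown to converge and to be both differentiable and $\nu_i$-integrable term by term, so that $\Ac$ commutes with the summation and the order-by-order identities \eqref{eq:G0.OIDE}--\eqref{eq:Gn.OIDE} reassemble (after dividing by $\tfrac12\sig_0^2(x)$, cf. \eqref{eq:G.OIDE}) into \eqref{eq:AG=qv}. The algebraic half is already in hand: writing $G_n=q_n\,\ee_{nc}$ with $q_n:=Q_1\phi_{nc}^{-1}\prod_{k=1}^{n-1}(-\chi_{kc}/\phi_{kc})$, the eigenrelations $\Ac_0\ee_\lam=\phi_\lam\ee_\lam$ and $\Ac_1\ee_\lam=\chi_\lam\ee_\lam$ give $\Ac_0 G_n=\phi_{nc}q_n\ee_{nc}$ and $\Ac_1 G_n=\chi_{nc}q_n\ee_{nc}$, and the sum $\sum_n\del^n\Ac_0 G_n+\sum_n\del^{n+1}\ee_c\Ac_1 G_n$ telescopes (via \eqref{eq:G0.OIDE}--\eqref{eq:Gn.OIDE}, after reindexing the second sum) to $I_0+\del\ee_c I_1$, \emph{provided} the two series may be formed termwise. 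So the entire content is analytic: justify the termwise operations.

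First I would establish convergence. The ratio of successive coefficients is $q_{n+1}/q_n=-\chi_{nc}/\phi_{(n+1)c}$, and since the numerator and denominator of \eqref{eq:nu.condition} agree with $\chi_{nc}$ and $\phi_{(n+1)c}$ up to lower-order polynomial-in-$n$ corrections, \eqref{eq:nu.condition} is equivalent to $q_{n+1}/q_n\to0$. Hence for each fixed $x$ and $\del\ge0$ the series $\sum_n\del^n q_n\ee^{ncx}$ has ratio $\del\ee^{cx}|q_{n+1}/q_n|\to0$ and converges absolutely; in fact the super-geometric decay of $|q_n|$ gives $\sum_n\del^n|q_n|\rho^n<\infty$ for every $\rho>0$, which absorbs the polynomial factors $(nc)^j$ in the termwise derivatives $G_n^{(j)}=(nc)^jq_n\ee_{nc}$. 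Thus $\sum_n\del^nG_n$, $\sum_n\del^nG_n'$ and $\sum_n\del^nG_n''$ converge locally uniformly, so $G\in C^2\subset C^{1+}(\Rb)$ with derivatives obtained termwise, and the diffusion part $\tfrac{a^2(x)}{2}(\d^2-\d)G$ of $\Ac G$ is immediately the termwise sum.

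The remaining, and main, obstacle is to pass the nonlocal part of $\Ac$ through the sum, i.e.\ to prove $\int(G(x+z)-G(x)+(1-\ee^z)G'(x))\mu(x,\dd z)=\sum_n\del^n\int(G_n(x+z)-\cdots)\mu(x,\dd z)$. Reducing via Tonelli to $\nu_0,\nu_1$, this interchange (indeed, the very finiteness of $\Ac G$) amounts to summability of $\sum_n\del^n|q_n|\ee^{ncx}\,\tilde\psi_i(nc)$, where $\tilde\psi_i(\lam):=\int_\Rb|\ee^{\lam z}-1+(1-\ee^z)\lam|\,\nu_i(\dd z)$ is finite for each $n$ by \eqref{eq:integrability}. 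For large $n$ the integrand $\ee^{ncz}-1+(1-\ee^z)nc$ is single-signed, so $\tilde\psi_i(nc)$ coincides with the integral term in $\phi_{nc}$ (resp.\ $\chi_{nc}$) and is dominated by $|\phi_{nc}|$ up to a polynomial in $n$; the required summability then reduces to $\sum_n\del^n|q_n|\ee^{ncx}|\phi_{nc}|<\infty$, equivalently to controlling $\del\ee^{cx}|\chi_{nc}/\phi_{nc}|$. This \emph{diagonal} ratio, as opposed to the off-diagonal $\chi_{nc}/\phi_{(n+1)c}$ of \eqref{eq:nu.condition}, is exactly where the comparison of the exponential-moment growth of $\nu_0$ and $\nu_1$ must be made, and I expect it to be the delicate point: one must leverage \eqref{eq:nu.condition} together with the growth/monotonicity of $\lam\mapsto\phi_\lam,\chi_\lam$ built from \eqref{eq:integrability} to bound it. Once this summability is secured, dominated convergence legitimizes the interchange, the telescoping of the first paragraph applies, and $G$ solves \eqref{eq:AG=qv}.
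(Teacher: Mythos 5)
Your route is the same as the paper's: use \eqref{eq:nu.condition} via the ratio test to show that the power series \eqref{eq:G.power.series} in $u=\del\ee_c$ has infinite radius of convergence, then apply $\Ac$ term by term so that the order-by-order identities \eqref{eq:G0.OIDE}--\eqref{eq:Gn.OIDE} reassemble into \eqref{eq:AG=qv}; your first two paragraphs reproduce the paper's two steps. Your third paragraph goes beyond the paper, and correctly: integrating $G(x+\cdot)$ against $\nu_i(\dd z)$ is not ``term-by-term integration within the radius of convergence'' in the textbook sense, and the Fubini bound it requires, $\sum_n\del^n|q_n|\ee^{ncx}\int_\Rb|\ee^{ncz}-1+(1-\ee^z)nc|\,\nu_i(\dd z)<\infty$, is governed by the \emph{diagonal} quotient $\del\ee^{cx}|\chi_{nc}/\phi_{nc}|$ rather than the off-diagonal quotient $|\chi_{nc}/\phi_{(n+1)c}|$ controlled by \eqref{eq:nu.condition}. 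This is precisely the point at which the published proof is silent.

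However, you leave that step open (``I expect it to be the delicate point\dots''), and it cannot in fact be closed from \eqref{eq:integrability} and \eqref{eq:nu.condition} alone. Take $c>0$, $\alpha\neq\beta$, and $\nu_0=\nu_1$ equal to the finite measure $\ee^{-z^2}\mathbf{1}_{\{z>1\}}\dd z$: then $\phi_{nc}$ and $\chi_{nc}$ both behave like $C\ee^{n^2c^2/4}$, so \eqref{eq:nu.condition} holds because $\phi_{(n+1)c}/\phi_{nc}\to\infty$, yet $\chi_{nc}/\phi_{nc}\to1$ and $Q_1\neq 0$; consequently $|q_n\phi_{nc}|=|Q_1|\prod_{k=1}^{n-1}|\chi_{kc}/\phi_{kc}|$ converges to a positive constant, the terms $\del^n\Ac_0G_n(x)=\del^nq_n\phi_{nc}\,\ee^{ncx}$ do not tend to zero once $\del\ee^{cx}\geq 1$, and indeed $G(x+\cdot)$ fails to be $\nu_0$-integrable for such $x$, so $\Ac G(x)$ is not even defined by \eqref{eq:generator}. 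So an additional hypothesis is genuinely needed --- e.g.\ that the diagonal ratio $|\chi_{nc}/\phi_{nc}|$ tends to zero, or that $\sup_n\phi_{(n+1)c}/\phi_{nc}<\infty$ so that the two ratios are comparable; these do hold in all of the paper's worked examples, where one of $\nu_0,\nu_1$ vanishes or is a Dirac mass. In short: your plan isolates exactly the right quantity, and is more honest than the paper's own proof, but as it stands it does not complete the argument, and no argument from the stated hypotheses alone can.
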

\begin{proof}
The summation in \eqref{eq:G.expand} 
can be written as 
\begin{align}
		&		- Q_0 x  + Q_1 \sum_{n=1}^\infty a_n u^n(x), &
\text{where }a_n
		=		\frac{1}{\phi_{nc}} \prod_{k=1}^{n-1} \frac{-\chi_{kc}}{\phi_{kc}}, \text{ and }
u(x)
		=		\del \ee_c(x). \label{eq:G.power.series}
\end{align}
The infinite sum is a power series in $u$, 
with coefficients $\{a_n\}_{n \geq 1}$ satisfying, by \eqref{eq:nu.condition},
\begin{align}
\lim_{n \to \infty} \frac{a_{n+1}}{a_n}
	&=		\lim_{n \to \infty} \frac{-\chi_{nc}}{\phi_{(n+1)c}} = 0 , \label{eq:asymptotic.1}
\end{align}
which implies that the sum in \eqref{eq:G.power.series} has infinite radius of convergence,
and $G$ is well-defined on $\mathbb{R}$ by \eqref{eq:G.expand}, with \eqref{eq:G0} and \eqref{eq:Gn}. 
As every power series can be differentiated and integrated term-by-term within its radius of convergence, 
$G$ solves OIDE \eqref{eq:AG=qv}.
\end{proof}
\begin{remark}
If $\alpha=0$, $\beta>0$, $\nu_1 \equiv 0$, and $c>0$ (respectively, $c<0$), then any L\'evy measure $\nu_0$ with support on the positive (resp.\ negative) axis will satisfy \eqref{eq:nu.condition}.
\end{remark}
\begin{remark}
\label{rmk:negative.jumps}
If $\alpha>0$, $\beta=0$, 
$\nu_0 \equiv 0$, and $c>0$ (respectively, $c<0$), then a L\'evy measure $\nu_1$ will satisfy \eqref{eq:nu.condition} only if the support of $\nu_1$ lies strictly within the negative (resp.\ positive) axis.
\end{remark}
\begin{remark}
\label{rmk:perturbation}
In the particular case where the forward price $F$ is a time-change of an exponential L\'evy process with variance {$\alpha$} and L\'evy measure $\nu_0$, the function $G_0$ prices the VS.  In
the more general class of models in \eqref{eq:setting.d}, which can be seen as a regular $\delta$-perturbation around the time-changed exponential L\'evy case, the candidate function $G$ for pricing the VS by Theorem \ref{thm:main}  
becomes, by \eqref{eq:G.power.series}, a $\delta$-perturbation around $G_0$.
\end{remark}


\noindent
In Figures \ref{fig:6} and \ref{fig:7}, using a variety of different model parameters, we plot
\begin{align}
h(F_T)
	&:=	G(\log F_T) - G(\log F_0) + A (F_T - F_0) , &
A
	&=	{ \frac{-1}{F_0} G'(\log F_0) } . \label{eq:h}
\end{align}
as a function 
of $F_T$, where $G$ is defined by \eqref{eq:G.expand}, \eqref{eq:G0} and \eqref{eq:Gn}.  
Note that, if $G$ prices the VS, then $h$ prices the VS for any constant $A$.  The particular value of $A$ in \eqref{eq:h} ensures that $h'(F_0)=0$.


\subsubsection{Ratio of the VS value to the $\log$ contract value}
Although the purpose of this paper is to compute the value of a VS relative to the $G$ contract (and to solve for $G$),
it is interesting to compute the ratio of the value of the VS to the value of a European $\log$ contract.  To this end, for a function $G$ that prices a VS, 
let
\begin{align}
\Qc(T,F_0)
	&:= \frac{\Eb \, G(\log F_T) - G(\log F_0)}{-\Eb \log(F_T/F_0)} 
	=		\frac{\Eb \, [\log F]_T}{-\Eb \log(F_T/F_0)} . \label{eq:Qc}
\end{align}
In \cite{carr2011variance} the authors find that if $F_t=\exp(\widehat{Y}_{\tau_t})$ where $\widehat{Y}$ is a \textit{L\'evy process}, then the ratio $\Qc(T,F_0)$ is a constant $Q$ which is \textit{independent} of the initial value $F_0$ of the underlying and the time to maturity $T$ (see Theorem \ref{case:proportional} and Remark \ref{rmk:tc} of Section \ref{sec:proportional}).  This is in contrast to empirical results from the same paper, which show in a study of S\&P500 data that the ratio $\Qc(T,F_0)$ is not constant.  In the more general time-changed Markov setting considered in the present paper, the ratio $\Qc(T,F_0)$ can (in general) depend on the 
current value $F_0$ of the underlying and the time to maturity $T$.  Below, we derive a formal approximation for the ratio $\Qc(T,F_0)$ for one specific example of $(a^2,\mu)$ which is of the form \eqref{eq:setting.d}.
\begin{assumption}
\label{ass:1}
Throughout this section, we assume $F_t = \exp(Y_{\tau_t})$ where $\tau$ is a continuous time change \textit{independent} of $Y$ and the Laplace transform $L(t,\lam) :=	\Eb \, \ee^{\tau_t \lam}$ is known.  Let the Markov process $Y$ have local variance $a^2(x)$ and L\'evy kernel $\mu(x,\dd z)$ of the form
\eqref{eq:setting.d} with
\begin{align}
\alpha
	&=1 , &
\beta
	&=	0 , &
\sig_0^2(x)
	&=	2 \om^2 , &
\sig_1^2(x)
	&=	2 \om^2 \ee_c(x) , &
\nu_0
	&\equiv 0 , &
\nu_1
	&\equiv \nu ,
\end{align}
where $\om, c > 0$.  Assume moreover that the L\'evy measure $\nu$ satisfies the conditions of Theorem \ref{thm:unified}.   Thus, the function $G$, defined by \eqref{eq:G.expand}, \eqref{eq:G0} and \eqref{eq:Gn}, 
solves
 \eqref{eq:AG=qv} .  In accordance with Remark \ref{rmk:negative.jumps}, jumps must be downward, i.e., $\nu(\Rb^+)=0$.
\end{assumption}
\noindent
We compute an approximation for $\Qc(T,F_0)$ in three \textbf{Steps}, 
described below.
\begin{step}
\textit{Derive an approximation for $u(t,x;\varphi):=\Eb_x \, \varphi(Y_{t})$.}\\
Formally, the function $u$ satisfies the Kolmogorov backward equation
\begin{align}
( -\d_t + \Ac ) u
	&= 		0 , &
u(0,\cdot;\varphi)
	&=		\varphi, \label{eq:u.pde}
\end{align}
where $\Ac$, the generator of $Y$, is given by
\begin{align}
\Ac
	&=		\om^2 \Ac_0 + \del \ee_c \om^2 \Ac_1 . \label{eq:A.expand}
\end{align}
Now, suppose that the function $u$ has a power series expansion in $\del$
\begin{align}
u
	&=	\sum_{n=0}^\infty \del^n u_n , \label{eq:u.expand}
\end{align}
where the functions $\{u_n\}_{n \geq 0}$ are unknown.
Inserting expressions \eqref{eq:A.expand} and \eqref{eq:u.expand} into \eqref{eq:u.pde} and collecting terms of like powers of $\del$, we obtain a sequence of nested partial integro-differential equations (PIDEs) for the unknown functions $\{u_n\}_{n \geq 0}$
\begin{align}
\Oc(1):&&
( -\d_t + \om^2 \Ac_0 ) u_0
	&= 		0 , &
u_0(0,\cdot;\varphi)
	&=		\varphi, \label{eq:u0.pde} \\
\Oc(\del^n):&&
( -\d_t + \om^2 \Ac_0 ) u_n
	&= 		- \ee_c \om^2 \Ac_1 u_{n-1} , &
u_n(0,\cdot;\varphi)
	&=		0 , &
n
	&\geq 1 . \label{eq:un.pde}
\end{align}
The solution to this nested sequence of PIDEs is given in \citet[Equation ({5.2})]{lorig-jacquier-1}. We have
\begin{align}
u_n(t,x;\varphi)
	&=	\int_\Rb \( \sum_{k=0}^n \frac{\ee^{t \om^2 \phi_{\ii \lam + kc}}\ee_{\ii \lam + nc}(x)}
      {\prod_{j \neq k}^n ( \om^2 \phi_{\ii \lam + kc} - \om^2 \phi_{ \ii\lam + j c})}\)
      \( \prod_{k=0}^{n-1} \om^2 \chi_{\ii \lam + k c}\) \varphih(\lam)  \dd \lam, \label{eq:un}
\end{align}
where an empty product is defined to equal one $\prod_{k=0}^{-1}(\cdots):=1$ and $\varphih$ denotes the 
distributional generalization of the Fourier transform defined for integrable functions $\varphi$ by
\begin{align}
\varphih(\lam)
	&
:= \frac{1}{2\pi}\int_\Rb \varphi(x) \ee^{- \ii \lam x} \dd x.
\end{align}
Inserting expression \eqref{eq:un} into the sum \eqref{eq:u.expand} and truncating at order $N$ yields
$\ub_N$, our \textit{$N^\textup{th}$ order approximation of $u$}.  Explicitly, 
\begin{align}
\ub_N(t,x;\varphi)
	&:= \sum_{n=0}^N \del^n u_n(t,x;\varphi) \\
	&=	\int_\Rb \sum_{n=0}^N \del^n \( \sum_{k=0}^n \frac{\ee^{t \om^2 \phi_{\ii \lam + kc}}\ee_{\ii \lam + nc}(x) }
      {\prod_{j \neq k}^n ( \om^2 \phi_{\ii \lam + kc} - \om^2 \phi_{ \ii\lam + j c})}\)
      \( \prod_{k=0}^{n-1} \om^2 \chi_{\ii \lam + k c}\) \varphih(\lam) \dd \lam. \label{eq:uN}
\end{align}
\end{step}
\begin{step}
\textit{Derive an approximation for $v(t,x;\varphi):=\Eb_x \, \varphi(Y_{\tau_t})$.}\\
Using the independence of $\tau$ and $Y$, we have
\begin{align}
v(t,x;\varphi)
	&:= \Eb_x \varphi(Y_{\tau_t})
	=		\Eb_x \, \Eb_x[ \varphi(Y_{\tau_t}) | \tau_t ]
	=		\Eb \, u(\tau_{t},x;\varphi) . \label{eq:conditioning}
\end{align}
Replacing the function $u$ in \eqref{eq:conditioning} with $\ub_N$ yields $\vb_N$, our \textit{$N^\textup{th}$ order approximation of $v$}.  Explicitly, 
\begin{align}
\vb_N(t,x;\varphi)
	&:=	\Eb \, \ub_N(\tau_{t},x;\varphi)\\
	&=	\int_\Rb \sum_{n=0}^N \del^n \( \sum_{k=0}^n \frac{L(t,\om^2 \phi_{\ii \lam + kc})\ee_{\ii \lam + nc}(x) }
      {\prod_{j \neq k}^n ( \om^2 \phi_{\ii \lam + kc} - \om^2 \phi_{ \ii\lam + j c})}\)
      \( \prod_{k=0}^{n-1} \om^2 \chi_{\ii \lam + k c}\) \varphih(\lam) \dd \lam , \label{eq:vN}
\end{align}
using equation \eqref{eq:uN} and $\Eb \ee^{\lam \tau_t}=L(t,\lam)$.
\end{step}
\begin{step}
\textit{Derive an approximation for $\Qc(T,F_0)$.}\\
With $G$ as given in Theorem \ref{thm:unified}, we have
\begin{align}
\Qc(T,F_0) 
	&=	\frac{ \Eb G( \log F_T) - G(\log F_0) }{-\Eb \log (F_T/F_0) } \\
	&=	Q_0 + \frac{ \sum_{n=1}^\infty b_n \Big( \Eb \ee_{nc}( \log F_T ) - \ee_{nc}( \log F_0 ) \Big) }{ - \Eb \log F_T + \log F_0 } \\
	&=	Q_0 + \frac{ \sum_{n=1}^\infty b_n \Big( \Eb \ee_{nc}( Y_{\tau_T} ) - \ee_{nc}( \log F_0 ) \Big) }{ - \Eb Y_{\tau_T} + \log F_0 } \\
	&=	Q_0 + \frac{ \sum_{n=1}^\infty b_n \Big( v(T,\log F_0;\ee_{nc}) - \ee_{nc}( \log F_0 ) \Big) }{ - v(T,\log F_0;\Id) + \log F_0 } , \label{eq:ratio} \\
b_n
	&:=	Q_1 \del^n a_n
	=		Q_1 \frac{\del^n}{\phi_{nc}} \prod_{k=1}^{n-1} \frac{-\chi_{kc}}{\phi_{kc}},
\end{align}
where $\Id$ is the identity function $\Id(x) = x$.
Replacing the function $v$ wherever it appears in \eqref{eq:ratio} by $\vb_N$ and truncating the infinite sum at $N$ terms 
produces
$\bar{\Qc}_N(T,F_0)$, our \textit{$N^\textup{th}$ order approximation of $\Qc(T,F_0)$}.  Explicitly, 
\begin{align}
\bar{\Qc}_N(T,F_0)
	&:=	Q_0 + \frac{\sum_{n=1}^N b_n \Big( \vb_N(T,\log F_0;\ee_{nc}) - \ee_{nc}( \log F_0 ) \Big) }{ - \vb_N(T,\log F_0;\Id) + \log F_0 } . \label{eq:Q-bar.N}
\end{align}
The Fourier transforms of the complex exponential $\ee_\gam$ ($\gam \in \Cb$) and the identity function $\Id$, 
as %
needed to compute $\vb_N(T,\log F_0;\ee_{nc})$ and $\vb_N(T,\log F_0;\Id)$ in \eqref{eq:Q-bar.N}, are given by
\begin{align}
\widehat{\ee}_\gam(\lam)
	&=	\del(\lam + \ii \gam) , &
\gam
	&\in \Cb , &
\widehat{\Id}(\lam)
	&=	\ii \del'(\lam) , \label{eq:FTs}
\end{align}
where $\del$ and $\del'$ denote the Dirac delta function and its derivative, 
understood in the sense of distributions.
Inserting \eqref{eq:FTs} into \eqref{eq:vN} and integrating produces closed-form expressions for both $\vb_N(T,\log F_0;\ee_{nc})$ and $\vb_N(T,\log F_0;\Id)$.
\end{step}
\noindent
Figure \ref{fig:ratio.small.jump}  plots
 $\bar{\Qc}_N(T,F_0)$ as a function of $F_0$.

%
%

\section{Conclusion}
\label{sec:conclusion}
In \cite{carr2011variance}, the authors model the forward price as the exponential of a L\'evy process time-changed by a continuous increasing stochastic clock.  In this setting, they show that a variance swap has the same value as a fixed number of European $\log$ contracts.  The exact number of $\log$ contracts that price the variance swap depends only on the dynamics of the driving L\'evy process, irrespective of the time-change.
\par
This paper generalizes the underlying forward price dynamics to time-changed exponential Markov processes, where the background process may have a state-dependent (i.e., local) volatility and L\'evy kernel, and where the stochastic time-change may have arbitrary dependence or correlation with the Markov process.  In the time-changed Markov setting, we prove that the variance swap is priced by a European-style contract whose payoff depends only on the dynamics of the Markov process, not on the time-change.  We explicitly compute the payoff function that prices the variance swap for various driving Markov processes.  When the Markov process is a L\'evy process we recover the results of \cite{carr2011variance}.
\par
For certain Markov processes, we also compute directly from model parameters an approximation for valuation of European-style contracts, 
showing the variation in 
the ratio of the VS value to the $\log$ contract value 
as a function of the 
current level of the underlying.  This is in contrast to \cite{carr2011variance}, who show in the more restrictive time-changed L\'evy process setting that this ratio is constant.

\subsection*{Thanks}
The authors are grateful to Feng Zhang and Stephan Sturm for their helpful comments.

\bibliographystyle{chicago}
\bibliography{BibTeX-Master-New}

%
%

\clearpage
\begin{figure}
\centering
\includegraphics[width=.95\textwidth,height=.37\textheight]{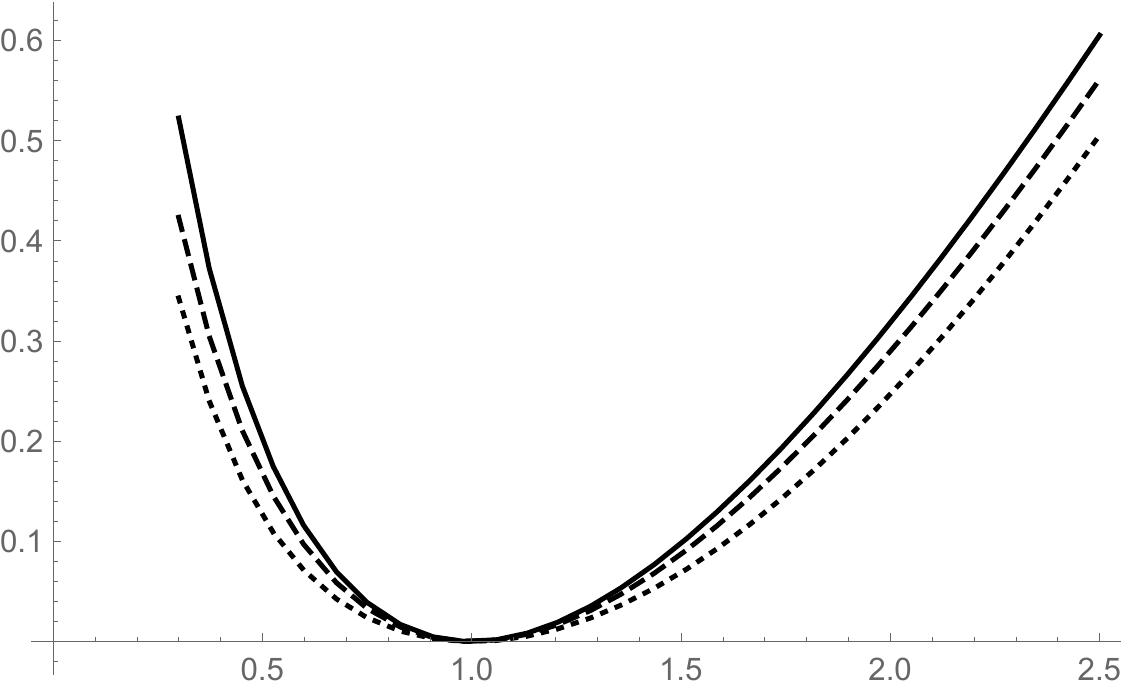}
\caption{
In this figure, we set $F_0=1$, $\alpha = 1$, $\beta = 0$, $\del = 0.35$, $\nu_0 \equiv 0$, $\nu_1 = \del_{z_0}$ where $z_0 = 2.5$,
and we plot $h(F_T)$ as a function of $F_T$ with $c = 0$ (solid), $c=-1$ (dashed) and $c=-2$ (dotted).
Note that, when $c=0$, we are in the setting of Section \ref{sec:proportional} and thus $h$ is a $\log$ contract plus an affine function.
}
\label{fig:6}
\end{figure}

\begin{figure}
\centering
\includegraphics[width=.95\textwidth,height=.37\textheight]{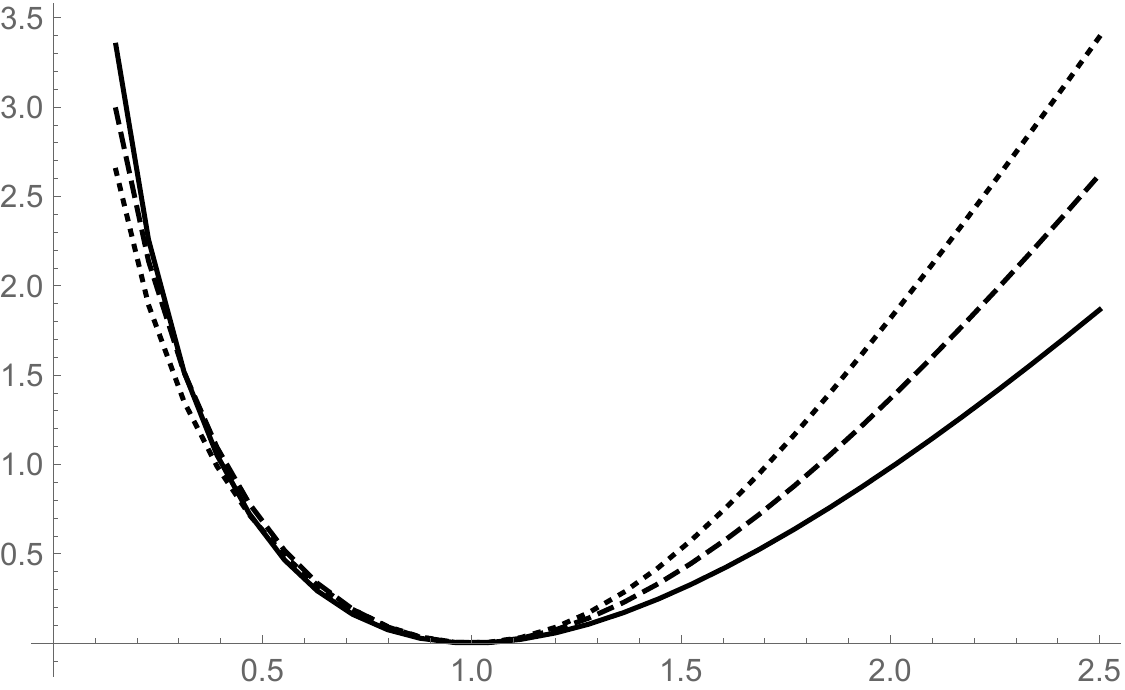}
\caption{
In this figure, we set $F_0=1$, $\alpha = 1$, $\beta = 0$, $\del = 1$, $\nu_0 \equiv 0$, $\nu_1 = \del_{z_0}$ where $z_0 = -2.5$,
and we plot $h(F_T)$ as a function of $F_T$ with $c = 0$ (solid), $c=2$ (dashed) and $c=4$ (dotted).
Note that, when $c=0$, we are in the setting of Section \ref{sec:proportional} and thus $h$ is a $\log$ contract plus an affine function.
}
\label{fig:7}
\end{figure}

\clearpage

\begin{figure}
\centering
\includegraphics[width=.95\textwidth]{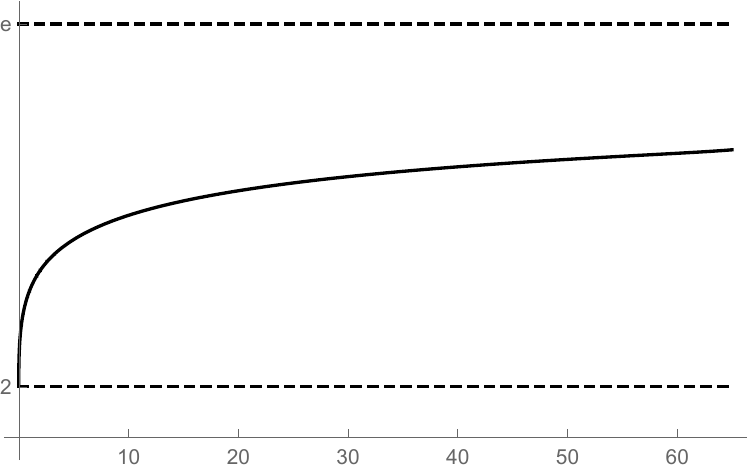}
\caption{A plot of $\bar{\Qc}_N(T,F_0)$, our $N^\textup{th}$ order approximation of $\Qc(T,F_0):=\frac{\Eb \, [\log F]_T}{ - \Eb \, \log(F_T/F_0)}$ as a function of $F_0$ (solid line).  In this plot, the forward price is given by $F_t = \exp(Y_t)$ (i.e., no time-change) and the Markov process $Y$ has local variance $a^2(x) = 2 \om^2$ and L\'evy kernel $\mu(x,\dd z)=\del \om^2 \ee^{cx}\nu(\dd z)$ where $\nu = \del_{z_0}$.  We use the following parameters: $c=0.395$, $\del=1.0$, $\om=0.3$, $z_0=-1.0$ and $T=1.0$.  We fix $N=35$.  Note that as $F_0 \to 0$, the jump intensity goes to zero: $\del\om^2 F_0^c\to 0$.  Accordingly, as $F_0 \to 0$ the ratio $\frac{\Eb \, [\log F]_T}{ - \Eb \, \log(F_T/F_0)} \to 2$, which is what one would expect for a forward price process that experiences no jumps (see equation \eqref{eq:no.jumps}).  As $F_0 \to \infty$ and the jump-intensity increases, we expect the ratio $\frac{\Eb \, [\log F]_T}{ - \Eb \, \log(F_T/F_0)} \to \moment_2/\expmoment_0 = \ee$, which is the corresponding ratio for a pure-jump L\'evy-type process (see equation \eqref{eq:pure.jumps}).  Note that if the Markov process $Y$ were a L\'evy process (i.e., with constant variance coefficient and L\'evy measure), as in \cite{carr2011variance}, the ratio $\frac{\Eb \, [\log F]_T}{ - \Eb \, \log(F_T/F_0)}$ would be a constant \textit{independent} of $F_0$.}
\label{fig:ratio.small.jump}
\end{figure}

\end{document}